\newcommand{\ket}[1]{{\left\vert{#1}\right\rangle}}
\newtheorem{lemma}{Lemma}
\renewcommand{\sec}[1]{\hyperref[sec:#1]{Section~\ref*{sec:#1}}}
\newcommand{\ssec}[1]{\hyperref[ssec:#1]{Subsection~\ref*{ssec:#1}}}
\newcommand{\fig}[1]{\hyperref[fig:#1]{Figure~\ref*{fig:#1}}}
\newcommand{\tab}[1]{\hyperref[tab:#1]{Table~\ref*{tab:#1}}}
\newcommand{\lem}[1]{\hyperref[lem:#1]{Lemma~\ref*{lem:#1}}}
\newcommand{\propos}[1]{\hyperref[propos:#1]{Proposition~\ref*{propos:#1}}}
\newcommand{\thm}[1]{\hyperref[thm:#1]{Theorem~\ref*{thm:#1}}}
\newcommand{\alg}[1]{\hyperref[alg:#1]{Algorithm~\ref*{alg:#1}}}
\newcommand{\Gate}[1]{\textsc{#1}}
\newcommand{\cnotgate}{\Gate{CNOT}}
\newcommand{\toffoligate}{\Gate{Toffoli}}
\newcommand{\czgate}{\Gate{CZ}}
\newcommand{\cczgate}{\Gate{CCZ}}
\newcommand{\swapgate}{\Gate{SWAP}} 
\newcommand{\hgate}{\Gate{H}}
\newcommand{\pgate}{\Gate{P}} 
\newcommand{\tgate}{\Gate{T}}
\pgfplotsset{compat=1.18}
\begin{document}

\title{Asymptotic yet practical optimization of quantum circuits implementing GF$(2^m)$ multiplication and division operations}

\author{Noureldin Yosri}
\affiliation{Google Quantum AI, 350 Main St, Venice, CA 90291, United States}
\author{Dmytro Gavinsky}
\affiliation{Institute of Mathematics of the Czech Academy of Sciences, \v Zitna 25, Praha 1, Czech Republic}
\author{Dmitri Maslov}
\affiliation{Google Quantum AI, 25 Massachusetts Ave NW, Washington, DC 20001, United States}

\begin{abstract}
We present optimized quantum circuits for GF$(2^m)$ multiplication and division operations, which are essential computing primitives in various quantum algorithms.  Our ancilla-free GF multiplication circuit has the gate count complexity of $O(m^{\log_2{3}})$, an improvement over the previous best bound of $O(m^2)$.  This was achieved by developing an efficient $O(m)$ circuit for multiplication by the constant polynomial $1+x^{\lceil{m/2}\rceil}$, a key component of Van Hoof's construction \cite{van2019space}.  This asymptotic reduction translates to a factor of 100+ improvement of the $\cnotgate$ gate counts in the implementation of the multiplication by the constant for parameters $m$ of practical importance.  For the GF division, we reduce gate count complexity from $O(m^2 \log(m))$ to $O(m^2 \log \log(m)/\log(m))$ by selecting irreducible polynomials that enable efficient implementation of both the constant polynomial multiplication and field squaring operations.  We demonstrate practical advantages for cryptographically relevant values of $m$, including reductions in both $\cnotgate$ and $\toffoligate$ gate counts.  Additionally, we explore the complexity of implementing square roots of linear reversible unitaries and demonstrate that a root, although itself still a linear reversible transformation, can require asymptotically deeper circuit implementations than the original unitary.

\end{abstract}

\maketitle
\tableofcontents

\section{Introduction}
Galois Field arithmetic is employed by a variety of quantum algorithms, such as Decoded Quantum Interferometry \cite{jordan2024optimization}, proofs of quantumness by cryptographic protocols (such as \cite{kahanamoku2022classically}), and the discrete logarithm problem over the elliptic curve group \cite{proos2003shor}.  As such, it is important to optimize circuit implementations corresponding to the basic field operations. 

Galois Field GF$(p^m)$, where $p$ is prime and $m$ is a positive integer is a finite field whose elements may be conveniently viewed as polynomials $a_0 + a_1x + \cdots + a_{m-1}x^{m-1}$, where $a_0, a_1, \ldots, a_{m-1} \in \mathbb{F}_p$.  GF$(p^m)$ addition is defined as component-wise addition with coefficients reduced over $\mathbb{F}_p$, and multiplication is defined as polynomial multiplication modulo an irreducible polynomial of degree $m$.  The selection of a specific irreducible polynomial affects the representation of the given field, but all fields of a fixed size are isomorphic. 

We focus on arithmetic in the field GF$(2^m)$ (i.e. $p{=}2$), which is a popular choice in applications.  This is because the selection of $p{=}2$ does not affect the difficulty of most complex problems of value (such as discrete logarithm), while the mapping of field elements to bits/qubits is particularly natural, and the cost of arithmetic circuits is low.  Specifically, In-place addition $\ket{a,b}\mapsto \ket{a, a \oplus b}$, where $a, b \in \text{GF}(2^m)$ can be accomplished by a depth-1 layer of $m$ $\cnotgate$ gates via component-wise EXOR, and the out-of-place addition $\ket{a,b,c}\mapsto \ket{a, b, c\oplus a \oplus b}$ can be accomplished by a depth-2 circuit with $2n$ $\cnotgate$ gates.  The other field operations (i.e., multiplication and division) are not as simple. In this paper, we focus on ancilla-free implementation of field multiplication and low ancilla constructions of division and base our work on previous studies such as \cite{van2019space, banegas2020concrete, kim2024toffoli, putranto2023depth}. We consider constructions that use a large number of ancilla, such as \cite{vandaele2025quantumbinaryfieldmultiplication}, to be out of scope due to the additional cost of ancilla. Ancilla-free constructions for multiplication have varying asymptotic $\toffoligate$ cost ranging from $O(m2^{\log_2^*{m}})$ to $O(m^2)$ but share the same $O(m^2)$ $\cnotgate$ cost, though the $\cnotgate$ cost is usually lower in practice. 

In this work we show that ancilla-free Karatsuba-based constructions can be optimized to have $O(m^{\log_2{3}})$ $\cnotgate$ cost, matching its $\toffoligate$ cost by carefully choosing an irreducible polynomial such that the multiplication by the constant $1+x^{\lceil\frac{m}{2}\rceil}$ \cite{van2019space} can be implemented with $O(m)$ $\cnotgate$ gates. We show that the reduction in the count of $\cnotgate$ is not just asymptotic but also practical, as can be seen in \tab{multiplication} and \tab{division}. We also reduce the $\toffoligate$ cost simultaneously by using a more general version of the Karatsuba formula known as the Karatsuba-like formula, first introduced in \cite{Weimerskirch2006GeneralizationsOT}.

For GF$(2^m)$ division, we use addition chains to compute the multiplicative inverse, which require fewer multiplications (and hence fewer ancilla) than the addition chain implicitly defined by the standard Itoh-Tsujii algorithm. This reduction, in addition to the intrinsic reduction we get from using our optimized multiplication circuits, leads to up to $28\%$ decline in both $\toffoligate$ and $\cnotgate$ gate costs for cryptographically relevant field sizes.

Since we study reversible circuits composed of the $\cnotgate$ and $\toffoligate$ gates. We focus, primarily, on minimizing the cost function $10 * \toffoligate + \cnotgate$ using no ancilla for multiplication and without increasing the typical ancilla count for division.  This is due to the combination of the sizes of the field considered and the relative costs of the $\cnotgate$ and $\toffoligate$ gates \cite{huggins2025fluid}.

\section{Literature Review}

GF($2^m$) multiplication circuits can be constructed in several ways with ancilla, such as \cite{roetteler2013discretelogarithms, kepley2015quantum, vandaele2025quantumbinaryfieldmultiplication}, as well as without ancilla. The ancilla-free constructions fall broadly into three categories: long multiplication-based constructions such as \cite{mastrovito1988vlsi, maslov2009m2}, Karatsuba-based constructions such as \cite{van2019space, putranto2023depth}, and CRT-based (Chinese Remainder Theorem) constructions such as \cite{kim2024toffoli}. 

We refer the reader to \cite{JangSrivastavaBaksi} for the previous best ancilla-free Karatusba-based constructions; here we provide a short summary of the state of the art.

The Karatsuba-based algorithm \cite{van2019space} has a $\toffoligate$ count that scales as $O(m^{\log_2{3}})$ and $O(m^2)$ $\cnotgate$ count.  Here, $O(m^2)$ comes from costing modular multiplication (and division) by the constant $1+x^{\lceil\frac{m}{2}\rceil}$, whereas all other components of the construction have $O(m^{\log_2{3}})$ $\cnotgate$ count or less. Putranto et al. \cite{putranto2023depth} attempted a constant optimization by removing the division by $1+x^{\lceil\frac{m}{2}\rceil}$ step and reordering the other operations.  This, however, only works if the target register is $\ket{0}$; otherwise, their construction does the mapping $\ket{a,b,c} \mapsto \ket{a,b,c(1+x^{\lceil\frac{m}{2}\rceil})+ab}$. Accepting this construction as an out-of-place multiplier, we were still unable to reproduce the $\cnotgate$ numbers in their tables; our numbers, however, are consistent with the numbers that \cite{JangSrivastavaBaksi} report for this construction and thus we will refer to their estimates instead. As we show in this work, multiplication/division by the constant $1 + x^{\lceil\frac{m}{2}\rceil}$ can be cheap, so we can make this construction in-place.


The CRT-based construction from \cite{kim2024toffoli}  has the best scaling of $\toffoligate$ gates at $O(m\cdot 2^{\log_2^*{m}})$ but it comes with a large $\cnotgate$ count, which, while asymptotically same as the Karatsuba-based construction at $O(m^2)$, is several multiples of the Karatsuba-based methods count.  As a result, the $\cnotgate$ gate count dominates the cost of the implementation, and we consider this scenario impractical.  This method has four parts, two of which require $\cnotgate$ circuits with a low $\cnotgate$ count at $O(m\log(m))$, but the other two require the $\cnotgate$ gate count of $O(m^2)$. 

Another approach to gate count optimization was proposed in an earlier work \cite{kepley2015quantum}, which gives the desired $O(m^{\log_2{3}})$ gate complexity only when the irreducible polynomial is a trinomial.  This, however, covers only some but not all cases \cite{seroussi1998table}.  In contrast, our constructions work for all $m$ and cover both field multiplication and division operations, which is an important consideration for applications, as they tend to rely on both these operations simultaneously.

In this work, we focus on the Karatsuba-based methods, although many of the techniques we introduce can optimize the CRT-based circuits as well. We further highlight that Karatsuba multiplication over GF$(2^m)$ results in asymptotically smaller and more practical circuits compared to a standard Mastrovito multiplier \cite{mastrovito1988vlsi}, which can be thought of as an analogue of long multiplication for the Galois Field.  This holds even for small Galois Field sizes, but, interestingly, does not appear to apply to integers, despite a best approach to optimize the gate counts and ancillae \cite{gidney2019asymptotically}.



Division $a/b$ is done by first computing the multiplicative inverse $b^{-1}$, obtaining the result as the multiplication $a \cdot b^{-1}$, and finally uncomputing $b^{-1}$. Two approaches to computing the multiplicative inverse were proposed and are analogous to classical algorithms for computing the multiplicative inverse; the first relies on Fermat's Little Theorem (FLT) \cite{itoh1988fast} and the second is the Extended Euclidean Algorithm \cite{BY_GCD}. Jang et al. \cite{banegas2020concrete} studied both approaches, and Table 2 therein compares the cost of each approach for different field sizes.  We chose to focus on the FLT-based approach based on the study \cite[Table 2]{banegas2020concrete}, although it is worth noting that research continues to improve and optimize both approaches (see \cite{JangSrivastavaBaksi}).

\section{GF($2^m$) Multiplication And Division}

\subsection{Multiplication}
For GF$(2^m)$ multiplication, we start with the construction of \cite{van2019space}.  It utilizes Karatsuba algorithm \cite{karatsuba1962multiplication} to multiply unreduced polynomials over the binary field $\mathbb{F}_2$ with up to $m/2$ digits, and then, rather than introducing $m$ ancilla qubits, employs a smart trick \cite[Table 2]{van2019space} to multiply and reduce polynomials in-place during one last iteration of Karatsuba's algorithm.  Not only does it allow for obtaining a no-ancilla implementation of GF multiplication, but it also reduces the leading constant in front of the complexity figure by a factor of $2$ against a straightforward approach utilizing ancillae.  As a result,  the $\toffoligate$ count for $m\,{=}\,2^k$ is precisely $3^k$.  

Note that \cite{van2019space} performs the transformation $\ket{a,b,0} \mapsto \ket{a,b,ab}$, which we stay consistent with, and note that turning it into a true out-of-place mapping $\ket{a,b,c} \mapsto \ket{a,b,c\oplus ab}$ can be done by applying a $\cnotgate$ circuit to the input side that applies the transformation $\ket{c} \mapsto \ket{c x^{-n}} $ which takes $O(mw)$ $\cnotgate$ gates where $w$ is the weight of the irreducible polynomial.  In our case, $w$ remains small and does not affect asymptotics or even the $\cnotgate$ gate count in a meaningful way.


The bottleneck of this construction is two $\cnotgate$ gate circuits, the first multiplies by $1+x^{\lceil{m/2}\rceil}$ and the second divides by it (and it is thus the inverse of the first), which \cite{van2019space} synthesize using LUP, requiring $O(m^2)$ gates in the worst case.  We show how to reduce the cost of the multiplication by the constant polynomial $1+x^{\lceil{m/2}\rceil}$ from $O(m^2)$ down to either $O(m)$ or $O(m \log(m))$, depending on the scenario of interest, by first showing that irreducible polynomials exist for which this can be done and then offering constructive algorithms that guarantee these costs, thus making this bottleneck cost become irrelevant to the overall complexity since the other operations scale as $O\left(m^{\log_2{3}}\right)$.  The $O(m)$ implementation is to be used when only field multiplication operation is required, while the $O(m\log(m))$ implementation is to be used when both multiplication and division are needed, since the irreducible polynomials that enable this construction also enable squaring with $O(m\log(m))$ gates (see \sec{div}), required for efficient implementation of the field division operation.


Moving from ``pure'' Karatsuba, we examined two generalizations of Karatsuba, namely the Toom-Cook strategy (see Appendix \ref{appendix:ancilla_free_toom_cook}) and Karatsuba-like formulas.  To our knowledge, we offer a first circuit implementation of a quantum Toom-Cook algorithm for $GF(2^m)$ multiplication that relies on a known classical trick due to Zimmermann and Quercia \cite{zimmermann_faster_multiplication}. Unfortunately, this construction did not improve the cost over Karatsuba-like formulas also considered and reported in our work. We examined Karatsuba-like formulas that were first introduced classically in \cite{Weimerskirch2006GeneralizationsOT} and were recently used by Kim et al. \cite{kim2024toffoli} as a base case ($m \leq 8$). In contrast, we use it for the full construction, which led to improved $\toffoligate$ and $\cnotgate$ counts.  We further improved our $\toffoligate$ cost using a global optimization that we introduce in \sec{pctof} and our $\cnotgate$ cost using both local and global optimizations.

\subsection{Division}
A promising approach for implementing GF division \cite{itoh1988fast} utilizes Fermat's Little Theorem.  To compute $a/b$, where $a,b \in \text{GF}(2^m)$ we look into inverting $b$ and multiplying it with $a$.  To find $b^{-1}$, rely on the Fermat's Little Theorem, $b^{2^m-1}= 1$, and rewrite this equation as $b^{-1}=b^{2^m-2}$.  Thus, to find the desired $b^{-1}$ it suffices to square $b^{2^{m-1}-1}$.  To find the power $2^{m-1}{-}1$ of an element $b$ one can utilize a square and multiply approach, where, due to the binary expansion of $2^{m-1}{-}1$ consisting of just $1$'s, only $O(\log(m))$ multiplications suffice, together with $O(m)$ squaring operations in a set of $O(\log(m))$ batches.  The squaring operation in $\text{GF}(2^m)$ is a linear reversible transformation.  The cost of implementing linear reversible operations tends to be low.  Thus, it is reasonable to expect that the cost of the field division operation when implemented by the Itoh-Tsujii technique will be dominated by the cost of multiplications rather than squaring operations.  The work \cite[Table 2]{banegas2020concrete} investigates this and shows that relying on the Itoh-Tsujii algorithm results in better circuits than the alternatives.  However, due to the use of quadratic complexity implementation of linear reversible squaring circuits and their powers, the $\cnotgate$ gate overhead is significant.  Indeed, its complexity scales as $O(m^2 \log(m))$, overwhelming the $\toffoligate$ count complexity of $O(m^{\log_2{3}}\log(m))$, and numeric data shows that the $\cnotgate$ count significantly exceeds the $\toffoligate$ count.  Here, our goal is to reduce this overhead. 

GF$(2^m)$ has a useful feature that it allows to reversibly square an element $a$ in-place, $C: \ket{a} \mapsto \ket{a^2}$, and the circuit $C$ accomplishing this is linear reversible.  This is because $a^2=a_0+a_1x^2+\cdots+a_{m-1}x^{2(m-1)}$, and the reduction of monomials modulo the irreducible polynomial is a linear Boolean function. To construct the linear reversible operation performing the transformation $\ket{a} \,{\mapsto}\, \ket{a^2}$, it suffices to build the linear $m\,{\times}\, m$ map where columns show the mapping of monomials under the GF squaring operation.  This implies that for any $a \in \text{GF}(2^m)$ and positive integer $k$, $a^{2^k}$ is a linear reversible circuit, and it can be considered simple to compute \cite{patel2008optimal}.  Note, however, that \cite{patel2008optimal} offers an asymptotically optimal construction with complexity $O(m^2/\log(m))$ that, due to the high leading constant, can be impractical. 

Efficient squaring enables the construction of an efficient circuit for the last missing piece, the division/inversion (subtraction and addition in GF$(2^m)$ are identical).  The construction due to Itoh and Tsujii \cite{itoh1989structure, takagi2001fast}, also discussed in \cite{amento2012efficient, banegas2020concrete} in the context of reversible/quantum computations, allows to implement field division at the cost of $O(\log(m))$ multiplications and $O(\log(m))$ batches of repeated squaring operations.  Our construction of efficient multiplier and squaring circuits reduces the cost of the field division operation.  The details, including numeric results, can be found in \sec{div}.

Järvinen and Dimitrov \cite{dimitrov2013another} optimize the Itoh-Tsujii algorithm by employing the addition chains.  This results in the reliance on fewer GF multiplications necessary to implement the division. The addition chains idea has been studied extensively in classical computing \cite{Bernstein2025}.  Quantumly, however, it remains relatively underexplored, although Taguchi and Takayasu \cite{TaguchiTakayasu} used it to reduce the space required for inversion at the cost of slightly increasing the $\toffoligate$ gate count and in a later paper \cite{Taguchi2024} they replaced the karatsuba multiplier with the CRT multiplier \cite{kim2024toffoli} that allowed them to reduce the $\toffoligate$ gate count.  In our work, we use both the standard Itoh-Tsujii algorithm and its modification by the addition chains \cite{dimitrov2013another} to minimize the $\toffoligate$ and $\cnotgate$ gate counts, and sometimes ancillae.

\section{Multiplication By The constant $1 + x^{\lceil\frac{m}{2}\rceil}$}
We first describe the structure of the linear invertible matrix computing GF$(2^m)$ multiplication by a constant polynomial $1+x^{\lceil{m/2}\rceil}$. A GF$(2^m)$ element is represented by a column vector of length $m$ with the top-to-bottom entries corresponding to the monomial powers of $0,1,\cdots,m{-}1$.  The matrix multiplication applies to the left of the column vector. 

The first column of the matrix $M$ encoding the multiplication by the constant polynomial $1+x^{\lceil{m/2}\rceil}$ is vector $(1,0,0,\cdots,0,1,0,0,\cdots,0)^T$ with ones in positions $0$ and $\lceil{m/2}\rceil$. It encodes the result of the multiplication of $1+x^{\lceil{m/2}\rceil}$ by $1$. The next column is $(0,1,0,0,\cdots,0,1,0,0,\cdots,0)^T$, i.e., it is a cyclic shift of the first column by one position down.  The first $m{-}\lceil{m/2}\rceil{-}1$ columns are all shifts by one of the previous column.  When $c=m-\lceil{m/2}\rceil$, the $c^\text{th}$ column must represent $x^c+x^m$, but $x^m$ is reduced modulo the irreducible polynomial $x^m+x^{l_1}+x^{l_2}+\cdots+x^{l_k}+1$, where $m>l_1>l_2>\cdots>l_k>0$, and is thus replaced with $x^{l_1}+x^{l_2}+\cdots+x^{l_k}+1$. Here, we consider irreducible polynomials where $l_1 < m-\lceil{m/2}\rceil$. This means that the new column, $(1,0,0,\cdots,0,1,0,0,\cdots,0,\cdots1,0,0,\cdots,0)^T$ is being introduced with $1$'s at positions $0, l_1, l_2,\cdots,l_k,$ and $c$. This column is then shifted by $1$ for each next position to complete the matrix $M$. 

Recall that a $\cnotgate$ circuit can implement a matrix $M$ by performing column and row addition operations. These matrix operations correspond to assigning the $\cnotgate$ gates to the beginning or the end of the circuit.

\begin{lemma}
Subject to a straightforward assumption, in GF$(2^m)$ multiplication by a constant polynomial $1+x^{\lceil{m/2}\rceil}$ can be implemented by a $\cnotgate$ circuit with $O(m \log(m))$ gates.
\end{lemma}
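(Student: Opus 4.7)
My plan is to exploit the sparse, banded structure of the matrix $M$ described just before the lemma statement. Let $n = \lceil m/2 \rceil$ and $c = m - n$, and interpret the ``straightforward assumption'' as the additional requirement that the irreducible polynomial $p(x) = x^m + x^{l_1} + \cdots + x^{l_k} + 1$ has weight $k + 2 = O(\log m)$ (with $l_1 < c$, as already stipulated in the preceding paragraph). Existence of such polynomials for all $m$ follows from known density estimates on low-weight irreducibles over $\mathbb{F}_2$. A direct column count from the description of $M$ then shows that it has $m$ diagonal, $c$ sub-diagonal (offset $n$), and $(k+1)n$ super-diagonal nonzero entries, for a total of $O(mk) = O(m \log m)$.

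First I would realize the sub-diagonal (shift) contribution by a descending sweep $a_i \gets a_i + a_{i-n}$ for $i = m-1, \ldots, n$, using $c$ CNOT gates and setting $a_i \leftarrow a_i^{\mathrm{orig}} + a_{i-n}^{\mathrm{orig}}$ for $i \in [n, m-1]$. Next I would realize the super-diagonal (reduction) contribution: for each column index $c + r$ with $r \in [0, n-1]$ and each $s \in \{0, \ldots, k\}$ (with the convention $l_0 = 0$), XOR the original value of register $a_{c+r}$ into the target register $a_{l_s + r}$. Because the previous sweep has replaced register $c + r$ by $a_{c+r}^{\mathrm{orig}} + a_{c+r-n}^{\mathrm{orig}}$ while leaving $a_{c+r-n}$ (which lies strictly below $c$) unchanged, the original source is recovered by adding in a second CNOT from register $c + r - n$. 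This costs at most $2$ CNOTs per $(r, s)$ pair, and hence at most $2(k+1)n = O(m \log m)$ CNOTs in total; combined with the shift phase, the circuit uses $O(m \log m)$ gates overall.

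The main technical obstacle is in-phase interference during the reduction sweep: the target positions $l_s + r$ occupy $[0, l_k + n - 1]$, which can overlap both the source range $[c, m-1]$ of later-step reads and the ``recovery source'' positions $c + r - n \in [0, c-1]$ of earlier-step reads. I would handle this by choosing an ordering of the $(r, s)$ pairs that respects the induced dependency graph---processing $r$ in increasing order suffices to keep each later source intact, since $l_s + r' = c + r$ forces $r' > r$ under $l_s < c$, and treating each contribution's two CNOTs as an atomic block prevents premature corruption of the recovery source. A short bookkeeping argument then confirms that every CNOT operates on registers holding exactly the values assumed by the two-CNOT recovery trick, which closes the argument within the $O(m \log m)$ gate budget.
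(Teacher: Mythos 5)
Your setup is fine: the nonzero count of $M$ is indeed $m + c + (k+1)n = O(mk)$, and your phase-1 descending sweep correctly leaves register $i$ holding $a_i^{\mathrm{orig}} + a_{i-n}^{\mathrm{orig}}$ for $i \in [n, m-1]$ while preserving registers $[0, n-1]$. The gap is in the scheduling of phase 2. Your two-\cnotgate{} recovery trick requires the recovery source $c + r - n$ to still hold $a_{c+r-n}^{\mathrm{orig}}$ at time $r$, but that register is itself a \emph{target} of the reduction sweep at earlier times: for even $m$ one has $c = n$, so the recovery source for time $r$ is register $r$, and register $r$ is written at times $r - l_s$ for every $s \geq 1$ with $l_s \leq r$. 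In your increasing-$r$ order these writes land before the read, so the recovery source is corrupted and the reconstruction $a_{c+r}^{\mathrm{orig}} = \mathrm{reg}_{c+r} + \mathrm{reg}_{c+r-n}$ no longer holds. Your analysis of the constraint $l_s + r' = c + r \Rightarrow r' > r$ protects only the \emph{primary} sources; the recovery sources need the opposite inequality ($l_s + r' = c + r - n$ forces $r' < r$), so increasing order protects one family and decreasing order the other, and no single ordering of the $r$'s protects both. The atomicity of the two-\cnotgate{} block does not help, since the offending writes occur at strictly earlier values of $r$. There is also a degenerate case for even $m$ and $s = 0$: the target $l_0 + r = r$ coincides with the recovery source $r$, so the second \cnotgate{} would have control equal to target and the intended cancellation of the spurious $a_r^{\mathrm{orig}}$ term cannot be performed at that point in the circuit at all.

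This is not a presentational issue but the central difficulty of the in-place synthesis, and it is exactly where the paper's proof does real work: after the analogous easy reductions, the paper is left with a circulant block $A+B$ in the bottom-right corner whose diagonalization costs an additional $\Theta(n(l_1 - l_k))$ (even $m$) or $\Theta(n\, l_1)$ (odd $m$) \cnotgate{} gates from Gaussian elimination on an $(l_1{-}l_k) \times (l_1{-}l_k)$ or $(l_1{+}1)\times(l_1{+}1)$ corner. That extra term is why the paper's ``straightforward assumption'' constrains $l_1$ or $l_1 - l_k$ rather than merely the weight $k+2$ of the irreducible polynomial: a weight-$O(\log m)$ assumption alone, as you use it, does not bound this term, and your budget of $2(k+1)n$ gates omits it entirely. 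To repair the argument you would either need to adopt the paper's stronger assumption and account for the corner elimination, or exhibit a genuinely conflict-free schedule (e.g., performing some contributions before the shift phase while sources are still pristine), which your current dependency analysis does not provide.
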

\begin{proof}
The assumption we rely on is that there always exist an irreducible polynomial $x^m+x^{l_1}+x^{l_2}+\cdots+x^{l_k}+1$, where $n>l_1>l_2>\cdots>l_k>0$, such that either $l_1 \in O(\log(m))$ or $l_1{-}l_k \in O(1)$.  Indeed, it is known \cite{chebolu2011counting} that the probability of drawing an irreducible polynomial from the uniform distribution of all polynomials of degree $m$ is $1/m$.  Thus, there are linearly many irreducible polynomial candidates for either $l_1 \sim \log(m)$ or $l_1{-}l_k \sim Const$ case, offering a constant probability to find an irreducible polynomial.  This probability can be boosted to almost 1 by increasing the parameters only slightly and without affecting the asymptotics.

\subsection{Trinomials with Even $m$}
We first consider a simple case of trinomial irreducible polynomials $x^m{+}x^l{+}1$ and even $m\,{=}\,2n$. In all such cases, the cost of the multiplication by the constant polynomial $1+x^{\lceil{m/2}\rceil}$ is $O(m)$, and no special irreducible polynomial needs to be constructed. \alg{1} constructs the $\cnotgate$ circuit with $6n{-}l=
3m{-}l = O(m)$ $\cnotgate$ gates.  The number of $\cnotgate$ gates is further reducible to $1.5m{-}l$ if it suffices to output the bits in any order, which can be a useful trick for an all-to-all connected machine.

{\centering
\begin{algorithm}[H]
\caption{$\cnotgate$ circuit construction for GF$(2^m)$ with even $m=2n$ and trinomial $x^m+x^l+1$, where $n>l$}\label{alg:1}
\begin{algorithmic}[1]	
    \State{Input: matrix $M=\{m_{i,j}\}_{i,j=1..2n}$ of size $m{\times}m$}
    \State{Output: circuit reducing matrix $M$ to the identity using $O(m)$ $\cnotgate$ gates}
    \For{$i=0$ to $n{-}1$} \Comment{$n$ $\cnotgate$ gates}
        \State{Add row $i$ to row $n+i$}
    \EndFor
    \For{$i=0$ to $n{-}1$} \Comment{$n$ $\cnotgate$ gates}
        \State{Add column $i$ to column $n+i$}
    \EndFor
    \For{$i=l$ to $n{-}1$} \Comment{$n{-}l$ $\cnotgate$ gates}
        \State{Add row $n+i$ to row $i$}
    \EndFor 
    \State{Cyclically shift bottom $n$ rows $l$ positions up} \Comment{$n{-}1$ $\swapgate$ gates}
\end{algorithmic}
\end{algorithm}
}

The workings of \alg{1} are best illustrated with an example. Next, we consider multiplication by $1{+}x^5$ over GF$(2^{10})$ with the irreducible polynomial $x^{10}{+}x^3{+}1$, where bold font is used to highlight the bit values that changed:

$
M =
\left[
\begin{array}{ccccc|ccccc}
1 & 0 & 0 & 0 & 0 & 1 & 0 & 0 & 0 & 0 \\
0 & 1 & 0 & 0 & 0 & 0 & 1 & 0 & 0 & 0 \\
0 & 0 & 1 & 0 & 0 & 0 & 0 & 1 & 0 & 0 \\
0 & 0 & 0 & 1 & 0 & 1 & 0 & 0 & 1 & 0 \\
0 & 0 & 0 & 0 & 1 & 0 & 1 & 0 & 0 & 1 \\ \hline
1 & 0 & 0 & 0 & 0 & 1 & 0 & 1 & 0 & 0 \\
0 & 1 & 0 & 0 & 0 & 0 & 1 & 0 & 1 & 0 \\
0 & 0 & 1 & 0 & 0 & 0 & 0 & 1 & 0 & 1 \\
0 & 0 & 0 & 1 & 0 & 0 & 0 & 0 & 1 & 0 \\
0 & 0 & 0 & 0 & 1 & 0 & 0 & 0 & 0 & 1 \\
\end{array}
\right]
\;{\stackrel{{\mbox{3-5:}}}{\mapsto}}\;
\left[
\begin{array}{ccccc|ccccc}
1 & 0 & 0 & 0 & 0 & 1 & 0 & 0 & 0 & 0 \\
0 & 1 & 0 & 0 & 0 & 0 & 1 & 0 & 0 & 0 \\
0 & 0 & 1 & 0 & 0 & 0 & 0 & 1 & 0 & 0 \\
0 & 0 & 0 & 1 & 0 & 1 & 0 & 0 & 1 & 0 \\
0 & 0 & 0 & 0 & 1 & 0 & 1 & 0 & 0 & 1 \\ \hline
{\bf 0} & 0 & 0 & 0 & 0 & {\bf 0} & 0 & 1 & 0 & 0 \\
0 & {\bf 0} & 0 & 0 & 0 & 0 & {\bf 0} & 0 & 1 & 0 \\
0 & 0 & {\bf 0} & 0 & 0 & 0 & 0 & {\bf 0} & 0 & 1 \\
0 & 0 & 0 & {\bf 0} & 0 & {\bf 1} & 0 & 0 & {\bf 0} & 0 \\
0 & 0 & 0 & 0 & {\bf 0} & 0 & {\bf 1} & 0 & 0 & {\bf 0} \\
\end{array}
\right]
\;{\stackrel{{\mbox{6-8:}}}{\mapsto}}\;
\left[
\begin{array}{ccccc|ccccc}
1 & 0 & 0 & 0 & 0 & {\bf 0} & 0 & 0 & 0 & 0 \\
0 & 1 & 0 & 0 & 0 & 0 & {\bf 0} & 0 & 0 & 0 \\
0 & 0 & 1 & 0 & 0 & 0 & 0 & {\bf 0} & 0 & 0 \\
0 & 0 & 0 & 1 & 0 & 1 & 0 & 0 & {\bf 0} & 0 \\
0 & 0 & 0 & 0 & 1 & 0 & 1 & 0 & 0 & {\bf 0} \\ \hline
0 & 0 & 0 & 0 & 0 & 0 & 0 & 1 & 0 & 0 \\ 
0 & 0 & 0 & 0 & 0 & 0 & 0 & 0 & 1 & 0 \\
0 & 0 & 0 & 0 & 0 & 0 & 0 & 0 & 0 & 1 \\
0 & 0 & 0 & 0 & 0 & 1 & 0 & 0 & 0 & 0 \\
0 & 0 & 0 & 0 & 0 & 0 & 1 & 0 & 0 & 0 \\
\end{array}
\right] \linebreak
\;{\stackrel{{\mbox{9-11:}}}{\mapsto}}\;
\left[
\begin{array}{ccccc|ccccc}
1 & 0 & 0 & 0 & 0 & 0 & 0 & 0 & 0 & 0 \\
0 & 1 & 0 & 0 & 0 & 0 & 0 & 0 & 0 & 0 \\
0 & 0 & 1 & 0 & 0 & 0 & 0 & 0 & 0 & 0 \\
0 & 0 & 0 & 1 & 0 & {\bf 0} & 0 & 0 & 0 & 0 \\
0 & 0 & 0 & 0 & 1 & 0 & {\bf 0} & 0 & 0 & 0 \\ \hline
0 & 0 & 0 & 0 & 0 & 0 & 0 & 1 & 0 & 0 \\ 
0 & 0 & 0 & 0 & 0 & 0 & 0 & 0 & 1 & 0 \\
0 & 0 & 0 & 0 & 0 & 0 & 0 & 0 & 0 & 1 \\
0 & 0 & 0 & 0 & 0 & 1 & 0 & 0 & 0 & 0 \\
0 & 0 & 0 & 0 & 0 & 0 & 1 & 0 & 0 & 0 \\
\end{array}
\right]
\;{\stackrel{{\mbox{12:}}}{\mapsto}} \; Id.
$
\end{proof}
Unfortunately, this simple algorithm does not work for other cases because the matrix can no longer be divided into four quadrants of equal size that can be quickly cleaned (i.e., for odd $m$) or the irreducible polynomial has more than three terms and there are more off-diagonal $1$s to clean.  We will first focus on developing an algorithm for even $m$ and any number of polynomial terms, then we turn to the case of odd $m$, in which case the algorithm follows the same path except that the $n{\times}n$ quadrants are contained to corners, cyclic shift of the circulant matrix (explained below) is not needed, and the dimensions of the transformations are slightly different.

\subsection{General Irreducible Polynomial with Even $m$}

\alg{2} gives a construction for multiplication by the constant $1{+}x^{\lceil m/2 \rceil}$, Notice that \alg{2} contains every step in \alg{1} with a few extra steps.

{\centering
\begin{algorithm}[H]
\caption{$\cnotgate$ circuit construction for GF$(2^m)$ with even $m=2n$ and irreducible polynomial $x^m+x^{l_1}+x^{l_2}+\cdots+x^{l_k}+1$, where $n>l_1>l_2>\cdots>l_k$} \label{alg:2}
\begin{algorithmic}[1]	
    \State{Input: matrix $M=\{m_{i,j}\}_{i,j=0..2n-1}$ of size $m{\times}m$, $x^m+x^{l_1}+x^{l_2}+\cdots+x^{l_k}+1$ is irreducible polynomial}
    \State{Output: circuit reducing matrix $M$ to the identity using $O(m)$ $\cnotgate$ gates}
    \For{$i{=}0$ to $n{-}1$}  \Comment{$n$ $\cnotgate$ gates}
        \State{Add row $i$ to row $n{+}i$}
    \EndFor
    \For{$i{=}0$ to $n{-}1$}  \Comment{$n$ $\cnotgate$ gates}
        \State{Add row $n{+}i$ to row $i$}
    \EndFor
    \For{each non-zero element $M_{i,j}$ where $i=0..n{-}1, j=n..2n{-1}$}  \Comment{$n + \sum l_i$ $\cnotgate$ gates}
        \State{Add column $i$ to column $j$}
    \EndFor
    \State{Cyclically shift bottom $n$ rows $l_k$ positions up} \Comment{$n{-}1$ $\swapgate$ gates}
    \For{$j=n$ to $m-1$}  \Comment{$n(k{-}1)$ $\cnotgate$ gates}
        \For{$i{=}1$ to $k{-}1$}  
            \State{Add row $j$ to row $j+l_i{-}l_k$ when $j+l_i{-}l_k < m$}
        \EndFor
    \EndFor
    \State{Diagonalize rows $n..2n-l_1+l_k$ by adding columns $n..2n-l_1+l_k$ containing standard basis vectors as necessary, diagonalize bottom right $l_1{-}l_k {\times} l_1{-}l_k$ corner by Gaussian elimination over the last $l_1{-}l_k$ rows/columns. \Comment{$n(l_1-l_k)$ $\cnotgate$ gates}}
\end{algorithmic}
\end{algorithm}
}
The original matrix $M$ has the form 
$M = \begin{bmatrix}
    I_n & A \\
    I_n & B
\end{bmatrix}$, where $I_n$ is the $n {\times} n$ identity matrix.
\alg{2} performs the following transformations:
\begin{itemize}
    \item[3-5:] Add rows of top $I_n$ to the bottom $I_n$ to reduce the bottom $I_n$ matrix to all-zero matrix $0_n$. This costs $n$ $\cnotgate$ gates.  The leftover matrix is $\begin{bmatrix}
    I_n & A \\
    0_n & A{+}B
    \end{bmatrix}$, where $A{+}B$ is the element-wise sum of $A$ and $B$.
    \item[6-8:] Add bottom $n$ rows to the top $n$ rows.  This costs $n$ $\cnotgate$ gates. The leftover matrix is $\begin{bmatrix}
    I_n & B \\
    0_n & A{+}B
    \end{bmatrix}$.
    \item[9-11:] Use columns of $I_n$ to reduce $k{+}1$ non-zero diagonal and superdiagonals in the matrix $B$ to zero. This uses $n$ $\cnotgate$ gates for the main diagonal and at most $\sum_{i=1}^k l_i \leq l_1k$ $\cnotgate$ gates to reduce the top left upper triangular corner with the sides of size $l_1$. The leftover matrix is $\begin{bmatrix}
    I_n & 0_n \\
    0_n & A{+}B
    \end{bmatrix}$.
    \item[11:] We note that the matrix $A{+}B$ is circulant.  This step is only necessary for even $m$ and can be ignored for odd $m$. The goal is to put 1's into the main diagonal of the matrix $A+B$ by cyclically shifting its columns. The cost is $3n{-}3$ $\cnotgate$ gates ($n{-}1$ $\swapgate$s). The leftover matrix has the form $\begin{bmatrix}
    I_n & 0_n \\
    0_n & (A{+}B)^\uparrow
    \end{bmatrix}$.
    \item[13-17:] This step reduces $(A{+}B)^\uparrow$ to the following form, 
    $\begin{bmatrix}
    I_{n-l_1+l_k} & C_{n-l_1+l_k \times l_1-l_k} \\
    0_{l_1-l_k \times n-l_1+l_k} & D_{l_1-l_k}
    \end{bmatrix}$, where $C_{n-l_1+l_k \times l_1-l_k}$ and $D_{l_1-l_k}$ are some $n-l_1+l_k \times l_1-l_k$ and $l_1-l_k \times l_1-l_k$ matrices, correspondingly, and $D_{l_1-l_k}$ is furthermore invertible, and $0_{l_1-l_k \times n-l_1+l_k}$ is a proper size all-zero matrix.  All other parts of the leftover matrix ($0_n$ and $I_n$) are kept unchanged.  This step uses no more than $n(k{-}1)$ $\cnotgate$ gates.
    \item[18:] $C_{n-l_1+l_k \times l_1-l_k}$ is reduced to zero using column additions from matrix $I_{n-l_1+l_k}$.  And then the invertible square matrix $D_{l_1-l_k}$ is diagonalized by standard Gaussian elimination.  The $\cnotgate$ cost is at most $(l_1-l_k)(n-l_1+l_k)+(l_1-l_k)^2 = n(l_1-l_k)$.
\end{itemize}
The full implementation thus contains no more than $n(k{+}l_1{-}l_k{+}5) + l_1k -3$ $\cnotgate$ gates, which is in $O(m)$ so long as $l_1{-}l_k \in O(1)$. Recall that $k{+}2$ is the weight of the irreducible polynomial, $l_1$ is the second largest power of the monomial in it (largest power is $m$), $l_k$ is the second smallest power of the monomial in it (smallest power is $0$, corresponding to the constant term), and $n=m/2$.  This upper bound requires that the irreducible polynomial be found among those with a low $l_1{-}l_k$.  There are ${\Theta}(n2^{l_1{-}l_k})$ different irreducible polynomial candidates for each value $l_1{-}l_k$.

\subsection{Trinomials and General Irreducible Polynomials with Odd $m$}

Similar constructions can be used for odd $m \,{=}\, 2n {+} 1$, as shown in \alg{3} and \alg{4}. \alg{3} is essentially \alg{1} with a few extra steps leading to a total $\cnotgate$ count of $n(l{+}6) - 3l = O(m \log(m))$ for trinomials, while \alg{4}, which is a modified version of \alg{2}, has a count of $\cnotgate$ of $n(2l_1 {+}5) - l_1^2 - l_1 + \sum l_i = O(m \log(m))$. Both constructions differ from their counterparts for even $m$ in the same way, namely, the replacement of the cyclic shift step with a few row or column operations that depend on $l_i$.

{\centering
\begin{algorithm}[H]
\caption{$\cnotgate$ circuit construction for GF$(2^m)$ with odd $m=2n+1$ and trinomial $x^m+x^l+1$, where $n>l$}\label{alg:3}
\begin{algorithmic}[1]	
    \State{Input: matrix $M=\{m_{i,j}\}_{i,j=1..2n+1}$ of size $m{\times}m$}
    \State{Output: circuit reducing matrix $M$ to the identity using $O(m \log(m))$ $\cnotgate$ gates}
    \For{$i=0$ to $n{-}1$} \Comment{$n$ $\cnotgate$ gates}
        \State{Add row ${i}$ to row $n{+}i{+}1$}
    \EndFor
    \For{$i=0$ to $n{-}1$} \Comment{$n$ $\cnotgate$ gates}
        \State{Add column $i$ to column $n{+}i$}
    \EndFor
    \For{$i=l$ to $n{-}1$} \Comment{$n{-}l$ $\cnotgate$ gates}
        \State{Add column $i$ to column $n{-}l{+}i$}
    \EndFor 
    \For{$i=0$ to $n{-}l{-}1$} \Comment{$2(n{-}l)$ $\cnotgate$ gates}
        \State{Add row $n+i$ to row $n+i+1$}
        \State{Add row $n+i$ to row $n+i+l+1$}
    \EndFor 
    \State{Diagonalize rows $n..2n-l-1$ by adding columns $n..2n-l-1$ containing standard basis vectors as necessary, diagonalize bottom right $(l{+}1) {\times} (l{+}1)$ corner by Gaussian elimination over the last $l{+}1$ rows/columns. \Comment{$n(l{+}1)$ $\cnotgate$ gates}}
\end{algorithmic}
\end{algorithm}
}

{\centering
\begin{algorithm}[H]
\caption{$\cnotgate$ circuit construction for GF$(2^m)$ with odd $m=2n+1$ and irreducible polynomial $x^m+x^{l_1}+x^{l_2}+\cdots+x^{l_k}+1$, where $n>l_1>l_2>\cdots>l_k$} \label{alg:4}
\begin{algorithmic}[1]	
    \State{Input: matrix $M=\{m_{i,j}\}_{i,j=0..2n-1}$ of size $m{\times}m$, $x^m+x^{l_1}+x^{l_2}+\cdots+x^{l_k}+1$ is irreducible polynomial}
    \State{Output: circuit reducing matrix $M$ to the identity using $O(m \log(m))$ $\cnotgate$ gates}
    \For{$i{=}0$ to $n{-}1$}  \Comment{$n$ $\cnotgate$ gates}
        \State{Add row $i$ to row $n{+}i{+}1$}
    \EndFor
    \For{$i{=}0$ to $n{-}1$}  \Comment{$n$ $\cnotgate$ gates}
        \State{Add row $n{+}i{+}1$ to row $i$}
    \EndFor
    \For{each non-zero element $M_{i,j}$ where $i=0..n{-}1, j=n..2n$}  \Comment{$n + \sum l_i$ $\cnotgate$ gates}
        \State{Add column $i$ to column $j$}
    \EndFor
    \For{$j{=}n$ to $m{-}l_1{-}2$}  \Comment{$(n{-}l_1)(k {+} 1)$ $\cnotgate$ gates}
        \For{$i{=}1$ to $k$} 
            \State{Add row $j$ to row $j{+}l_i{+}1$}
        \EndFor
        \State{Add row $j$ to row $j{+}1$}
    \EndFor
    \State{Diagonalize rows $n..2n{-}l_1{-}1$ by adding columns $n..2n{-}l_1{-}1$ containing standard basis vectors as necessary, diagonalize bottom right $(l_1{+}1) {\times} (l_1{+}1)$ corner by Gaussian elimination over the last $l_1{+}1$ rows/columns. \Comment{$n(l_1{+}1)$ $\cnotgate$ gates}}
\end{algorithmic}
\end{algorithm}
}

\subsection{Irreducible Polynomials with Odd $m$, Circuit Cost $O(m)$.}

Here, we select to work with the irreducible polynomial $x^m + (x^{n-1} + x^{n-2} + ... + x^{n-l_1}) + (x^{l_2} + x^{l_2-1} + ... + 1)$, where $n{-}l_1 \,{>}\, l_2$ and $l_1,l_2 \geq 0$. We first develop a circuit with $O(m)$ $\cnotgate$ gates to reduce the linear reversible matrix corresponding to the field multiplication by $x^{n+1}{+}1$ to the problem of the synthesis of a circulant $(n{+}1)\,{\times}\,(n{+}1)$ matrix such that $\text{Dist}(row_i,row_{i+1})\,{=}\,2$, and then implement the circulant matrix using $O(m)$ gates.

\alg{5} reduces the $m\,{\times}\,m$ matrix $M$ into the form $\begin{bmatrix}
    I_{n} & 0_{n+1 \times n} \\
    0_{n \times n+1} & C_{n+1 \times n+1}
    \end{bmatrix}$,
where $C_{n+1 \times n+1}$ is circulant with ones in the first column at positions $0, 1, ..., l_2{+}1, n{-}l_1{+}1, n{-}l_1{+}2, ..., n$ using $O(m)$ $\cnotgate$ gates.

{\centering
\begin{algorithm}[H]
\caption{The reduction of the multiplication by $x^{n+1}+1$ for odd $m=2n+1$ and irreducible polynomial $x^m + (x^{n-1} + x^{n-2} + ... + x^{n-l_1}) + (x^{l_2} + x^{l_2-1} + ... + 1)$, where $n-l_1 > l_2$} \label{alg:5}
\begin{algorithmic}[1]	
    \State{Input: matrix $M=\{m_{i,j}\}_{i,j=0..2n-1}$ of size $m{\times}m$}
    \State{Output: matrix $M'=\begin{bmatrix}
    I_{n} & 0_{n+1 \times n} \\
    0_{n \times n+1} & C_{n+1 \times n+1}
    \end{bmatrix}$, where $C_{n+1 \times n+1}$ is circulant with $\text{Dist}(row_i,row_{i+1})=2$. The reduction of $M$ to $M'$ is obtained using $O(m)$ $\cnotgate$ gates}
    \For{$i{=}0$ to $n{-}1$}  \Comment{$n$ $\cnotgate$ gates}
        \State{Add row $i$ to row $n{+}i{+}1$}
    \EndFor
    \For{$i{=}0$ to $n{-}1$}  \Comment{$n$ $\cnotgate$ gates}
        \State{Add row $n{+}i{+}1$ to row $i$}
    \EndFor
    \For{$i{=}n{-}1$ downto $0$}  \Comment{$\leq n$ $\cnotgate$ gates}
        \State{{\bf if}(column $n{+}i$ is non-zero) {Add column $n{+}i$ to column $n{+}i{+}1$}}
    \EndFor
    \For{each non-zero element $M_{i,j}$ where $i=0..n{-}1, j=n..m$}  \Comment{$\leq 3n$ $\cnotgate$ gates}
       \State{Add column $i$ to column $j$}
   \EndFor
   \State{Uncompute steps 9-11}  \Comment{$\leq n$ $\cnotgate$ gates}

\end{algorithmic}
\end{algorithm}
}

We next explain the steps that \alg{5} takes:
\begin{itemize}
    \item[3-5:] Reduce $M=\begin{bmatrix}
    I_{n} & A_{n+1 \times n} \\
    B_{n \times n+1} & D_{n+1 \times n+1}
    \end{bmatrix}$, where $B_{n \times n+1}$ is a row of zeroes followed by the $I_{n}$ matrix on the bottom to $M=\begin{bmatrix}
    I_{n} & A_{n+1 \times n} \\
    0_{n \times n+1} & C_{n+1 \times n+1}     
    \end{bmatrix}$. Observe that $C_{n+1 \times n+1}$ is now circulant with $\text{Dist}(row_i,row_{i+1})=2$.
    \item[6-8:] Copy bottom $n$ rows into the top $n$ rows, which has the effect of turning $A_{n+1 \times n}$ into an upper triangular matrix.
    \item[9-11:] Add neighboring columns such that the number of ones in each reduces to no more than $3$---$2$ coming from the consideration that $\text{Dist}(CS^{i}(0^k1^{n-k+1}),CS^{i+1}(0^k1^{n-k+1}))$, where $k,i$ are integers and $CS$ is cyclic shift of the respective Boolean pattern, and $1$ coming from column $i$ having one less $1$ than column $i{+}1$.
     \item[12-14:] Each of at most $3$ ones in the top right matrix is reduced to zero by a column addition from the left top diagonal matrix.
     \item[15:] Return the bottom right matrix to the circulant form. 
\end{itemize}

\alg{6} implements the matrix $C$ using $O(n)$ $\cnotgate$ gates. The reason this algorithm works can be explained as follows:
\begin{itemize}
    \item[3-5:] These transformations reduce the weights of all columns except column $n$ to $2$.
    \item[6:] Since our matrix is full-rank, we can add at most $n$ columns to the rightmost column to force the number of $1$'s in it to be equal to one.
    \item[7-9:] For each column of weight $1$, find the weight-$2$ column it overlaps with and thus use a single column addition to correct the weight of this column to $1$.  The overlap always exists, since if it does not, it implies there is a linear transformation over a subset of qubits described by a matrix whose column weights are all equal to $2$; however, no such matrix can be invertible, leading to a contradiction.
    \item[10:] Fixes the leftover SWAPping transformation. 
\end{itemize}

{\centering
\begin{algorithm}[H]
\caption{Implementation of the circulant matrix $C_{n+1 \times n+1}$ arising in \alg{5}} \label{alg:6}
\begin{algorithmic}[1]	
    \State{Input: matrix $C=\{c_{i,j}\}_{i,j=0..n}$ of size $(n+1){\times}(n+1)$}
    \State{Output: circuit reducing matrix $C$ to the identity using $O(n)$ $\cnotgate$ gates}
    \For{$i{=}1$ to $n$}  \Comment{$n$ $\cnotgate$ gates}
        \State{Add column $i$ to column $i{-}1$}
    \EndFor
    \State{Reduce column $n$ to a weight-1 column by adding a subset of column $0..n{-}1$ to it.  Such a linear combination exists since the matrix is full-rank} \Comment{${\leq}n$ $\cnotgate$ gates}
    \For{$i{=}1$ to $n$}  \Comment{$n$ $\cnotgate$ gates}
        \State{Find column $j$ such that it overlaps (shares a 1 position) with column $t$. Add column $t$ into it. Rename $t=j$}
    \EndFor
    \State{We now have a leftover matrix where, by construction, each column has weight 1. This is a SWAPping of qubits; implement this transformation using standard techniques.} \Comment{${\leq}3n$ $\cnotgate$ gates}
\end{algorithmic}
\end{algorithm}
}

Combining \alg{5} and \alg{6} we obtain the upper bound of $13n\,{=}\,6.5m+O(1) \in O(m)$ for the implementation of field multiplication by $x^{n+1}+1$ over odd $m\,{=}\,2n{+}1$.  We note that the lines 3-5 of \alg{6} may replace the lines 9-11 of \alg{5}, while deleting the line 15 of \alg{5}, which leads to the improved upper bound of $11n\,{=}\,5.5m+O(1)$.

One last question we need to address is the existence of a suitable irreducible polynomial.  The number of candidate polynomials for $m\,{=}\,2n{+}1$ is $1+3+5+...+n{-}3 + (n{-}1)/2$ (for odd $n$), roughly equal to $m^2/8$.  Since the probability that a random polynomial is irreducible is inversely linear in $m$, as $m$ grows, the candidate set becomes overwhelmingly large to contain an irreducible polynomial with a high degree of certainty.  We wrote code to explicitly confirm that a suitable polynomial can always be found for $m$ up to 10,000.  Given that the first classically unsolved instance of the discrete logarithm problem over an elliptic curve group relies on a Galois Field of size $m{=}163$ \cite{certicom} and the scaling of classical attacks is exponential, our calculations probably suffice to cover all relevant use cases for a quantum computer. 

The combination of \alg{1} (even $m$, trinomial irreducible polynomials), \alg{2} (even $m$, irreducible polynomial $x^m+x^{l_1}+x^{l_2}+\cdots+x^{l_k}+1$, where $n>l_1>l_2>\cdots>l_k$ and $l_1{-}l_k \in O(1)$), \alg{3} (odd $m$, trinomial irreducible polynomials), and \alg{5}-\alg{6} (odd $m$, $x^m + (x^{n-1} + x^{n-2} + ... + x^{n-l_1}) + (x^{l_2} + x^{l_2-1} + ... + 1)$, where $n{-}l_1 \,{>}\, l_2$), we now have the implementation of the multiplication by the constant $1{+}x^{\lceil m/2 \rceil}$ at the linear cost not exceeding $5.5m$.  These constructions are asymptotically optimal, since each output qubit is different from an input qubit in at least half the number of cases, implying a straightforward lower bound of $0.5m$.

We reported the sizes of quantum circuits for the multiplication by the constant $1{+}x^{\lceil m/2 \rceil}$ following the selection of irreducible polynomials and algorithms developed in this section in \fig{1}.  While the gate count in our circuits is upper bounded by $5.5m$, as shown in the previous paragraph, actual circuits never contain more than $4.157854m$ $\cnotgate$ gates, and the majority of circuits contain between $3.5m$ and $4.16m$ $\cnotgate$ gates.  This is a consequence of the upper bound being loose, and also due to considering up to five suitable irreducible polynomials of the specified type when there is a choice (this, in turn, implies that considering more candidates may improve the numbers further).  We highlight that the $\cnotgate$ gate count in the previous construction \cite{van2019space} visually scales faster than $m^{\log_2{3}}$, see \fig{1}, thus strongly suggesting that it was the bottleneck in the implementation of GF multiplication by the Karatsuba algorithm.  Our implementation suppresses the cost of the multiplication by the constant $1{+}x^{\lceil m/2 \rceil}$ to a linear function with a small leading coefficient, thus removing this bottleneck cost and allowing to claim $O\left(m^{\log_2{3}}\right)$ gate complexity for (ancilla-free) Karatsuba multiplication over $\text{GF}(2^m)$. 

\begin{figure}[th]

\centering
\includegraphics[width=0.99\textwidth]{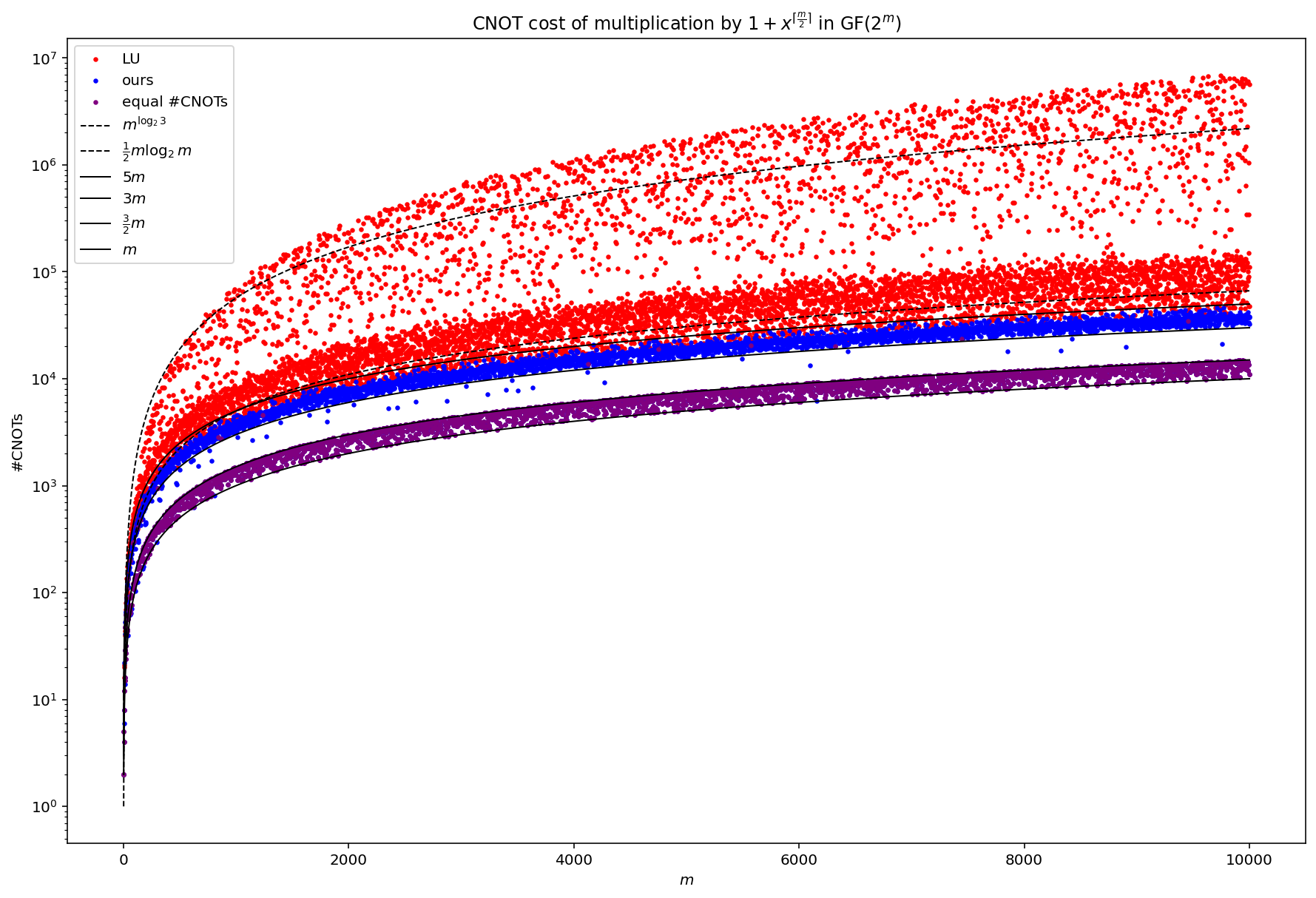}
\caption{Comparison of the gate count in the multiplication by the constant $1{+}x^{\lceil m/2 \rceil}$ over $\text{GF}(2^m)$ between state-of-the-art \cite{van2019space} LU implementation and ours.  Included are numeric experiments with $m$ up to 10,000.  The number of the $\cnotgate$ gates in our construction is upper bounded by $5.5m$, but in practice does not exceed $4.16m$.  Our optimization for larger $m$ consistently exceeds a factor of 100; for example, when $m{=}6159$, the LU decomposition requires 2,201,876 $\cnotgate$ gates, whereas ours takes 6,158 $\cnotgate$ gates, resulting in an improvement by a factor of 357.} \label{fig:1}
\end{figure}

\section{Cost of a $\sqrt{U}$ vs that of the $U$ itself}

Given a unitary transformation $U$ that one desires to implement as a quantum circuit, it can sometimes help to think of this $U$ as an evolution of a Hamiltonian. Then, both $U$ and its root $\sqrt{U}$ can be implemented by the algorithms simulating the evolution of the underlying Hamiltonian efficiently \cite{suzuki1991general, childs2018toward, low2019hamiltonian}. This approach, however, breaks when a root of $U$ is provably more complex than the implementation of $U$ itself.  Examples include $U\,{=}\,Id$, and its root given by the multiply-controlled $Z$ gate; indeed, the latter requires $\Omega(n)$ $\cnotgate$ gates \cite{shende2008cnot}. Here, we will look at transformations $U$ that are non-identity.  Another example can be given by the two-qubit controlled-$\pgate$ gate whose primary root, the controlled-$\tgate$ gate, cannot be implemented over two qubits as a Clifford+$\tgate$ circuit unless by approximation.  The proof is simple and relies on the determinant argument \cite{amy2013meet}. Here, we will require that both $\sqrt{U}$ and $U$ be implementable exactly, and the more expensive of them can even use ancillae and rely on arbitrary single- and two-qubit quantum gates. In addition, we will require that the library of gates that suffice to compute $\sqrt{U}$ and $U$ be as simple as possible.  This effectively implies that single-qubit circuits are ruled out, and one must consider circuits with entangling gates. $\czgate$ gate circuits do not work since $C^2\,{=}\,Id$ for every circuit $C$ composed with $\czgate$ gates.  The next simplest gate is the $\cnotgate$, and we thus focus on the linear reversible circuits.  Does there exist a linear reversible transformation $U$ whose root $V{:}\,\, V^2=U$ is both linear reversible and asymptotically more difficult to implement than $U$?  We answer this question positively while focusing on the circuit depth. 

\begin{lemma}
    For each even integer $n\,{=}\,2k$ there exist at least two families of linear reversible transformations that require circuit depth $\Omega(log(n))$ whose square can be implemented in constant depth, thus asymptotically faster. 
\end{lemma}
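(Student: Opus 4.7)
The plan is to exhibit two explicit families of linear involutions on $\mathbb{F}_2^n$---$\cnotgate$-transformations $V$ satisfying $V^2 = I$---each requiring circuit depth $\Omega(\log n)$. Since the identity is realized by the empty circuit (depth $0$), any such $V$ witnesses the claimed asymptotic gap. Both families are defined for every $n$, hence for every even $n = 2k$. The \emph{fan-out} family is $V_A \colon (x_1, \ldots, x_n) \mapsto (x_1,\ x_1{+}x_2,\ x_1{+}x_3,\ \ldots,\ x_1{+}x_n)$, and the \emph{fan-in} family is $V_B \colon (x_1, \ldots, x_n) \mapsto (x_1{+}x_2{+}\cdots{+}x_n,\ x_2,\ x_3,\ \ldots,\ x_n)$.

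The first step would be to verify $V_A^2 = V_B^2 = I$. Writing $V_A = I + N_A$, where $N_A$ has first column $e_2 + \cdots + e_n$ and all other columns zero, I would observe that the image of $N_A$ is spanned by $e_2 + \cdots + e_n$, which lies in the kernel of $N_A$; hence $N_A^2 = 0$ and $V_A^2 = I + 2N_A + N_A^2 = I$ over $\mathbb{F}_2$. The analogous argument applied to $V_B = I + N_B$, with $N_B$ supported only in its first row, gives $V_B^2 = I$. Thus $V_A^2$ and $V_B^2$ are realized by the empty circuit, in depth $0$.

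The substantive step would be the depth lower bound, for which I would invoke the standard light-cone bound for $\cnotgate$ circuits: traversing a depth-$d$ circuit layer by layer, the set of qubits whose current value depends on a fixed input $x_j$ can at most double per layer---each disjoint layer adds at most one newly ``infected'' qubit per already-infected qubit acting as a control---so a single input can affect at most $2^d$ outputs, and dually each output depends on at most $2^d$ inputs. For $V_A$, input $x_1$ appears in all $n$ outputs, forcing $2^d \geq n$, i.e., $d \geq \lceil \log_2 n \rceil$; for $V_B$ the dual statement applied to output $y_1$, which depends on all $n$ inputs, gives the same bound. Matching upper bounds of $\lceil \log_2 n \rceil$ are achieved by balanced binary trees of $\cnotgate$ gates, so the depth is $\Theta(\log n)$ in both cases. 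I do not foresee a significant obstacle; the involution identities are short algebraic checks and the lower bound relies only on the well-known causal-cone argument. The two families are genuinely distinct: the column-weight multiset of $V_A$ is $\{n, 1, \ldots, 1\}$, while that of $V_B$ is $\{1, 2, \ldots, 2\}$, so no qubit-permutation conjugation can map one to the other.
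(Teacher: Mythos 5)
Your argument is internally sound---the involution identities and the light-cone lower bound are both correct---but it proves only a degenerate version of the claim that the paper explicitly sets out to avoid. Your two families satisfy $V^2 = I$, so the ``square'' you implement in constant depth is the identity. The section in which this lemma sits opens by dismissing exactly this case: the paper notes that $U = \mathrm{Id}$ with root the multiply-controlled $Z$ gate is already a known instance of a root being harder than the unitary itself, and then states that it will only consider transformations $U$ that are non-identity. Consistently, the paper's own proof explicitly discards the circulant matrices that square to the identity (``we agreed not to look at such matrices''). So the substance of the lemma is the existence of a \emph{non-identity} linear reversible $U$ whose square root is asymptotically deeper, and an involution cannot witness that.

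The paper's route is to take circulant matrices: family A consists of circulants whose defining row has a single $0$ (excluding the two shift positions that yield involutions), and family B, for $n = p(p{+}1)$ with $p$ prime, uses the row with ones at positions $0, p, 2p, \ldots, p^2{-}p$. In both cases one checks that the dot products $(CS^t(r), r)$ of the row with its cyclic shifts equal $1$ for exactly one value of $t$, so the square is a nontrivial cyclic shift---a qubit permutation, implementable in depth $2$ with $\swapgate$ gates or depth $6$ with $\cnotgate$ gates, but not the identity. The depth lower bound for the root is then the same light-cone argument you use (an output depending on $n{-}1$, respectively about $\sqrt{n}$, inputs). To salvage your approach you would need to modify the fan-out/fan-in maps so that their squares become nontrivial yet still constant-depth transformations; as written, the construction collapses to the excluded trivial case.
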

\begin{proof}
The two families of linear reversible transformations we consider, both defined for the number of qubits $n\,{=}\,2k$, are:
\begin{enumerate}
    \item[A.] Circulant invertible matrices with one $0$ per row.  Since circulant matrices are fully defined by their first row $r$, we will often focus on the row $r$ and its properties.  For each $n$, there are $n{-}2$ such matrices in this family. This is because the circulant matrices with the first row having its $0$ in the 0th or $\frac{n}{2}$th positions square to identity, and we agreed not to look at such matrices.
    \item[B.] Take $n=p(p{+}1)$, where $p\,{>}\,2$ is prime.  Consider the circulant matrix described by the row with ones in positions $0, p, 2p, 3p, ..., p^2{-}p$. The position numbering starts with zero.  The first row can be shifted by anywhere from $0$ to $n{-}1$ positions, thus leading to a family of $n$ different matrices; however, only $n{-}1$ remain after we remove those that square to the identity. 
\end{enumerate}

We first show that each matrix in the family A and B squares to a qubit permutation (in fact, a cyclic shift). To prove this, we need to establish that $(CS^t(r),r)$, the dot product of rows and their cyclic shifts (denoted $CS$) equals $1$ for exactly one value $t$, where $0 \,{\leq}\, t \,{<}\, n$. Indeed, all columns of a circulant matrix are also their rows, and thus a vector of products $(CS^t(r),r)$ describes the product of a circulant matrix by itself. Since circulant matrices form a group, it follows that a circulant matrix with one $1$ per row is a cyclic shift. 

For the family A, $(r,r)\,{=}\,1$ since the row $r$ has odd weight, and $(CS^t(r),r)\,{=}\,1$ for all $t{:} \,0 \,{<}\, t \,{<}\, n$ because the strings $CS^t(r)$ and $r$ are non-trivial shifts of each other, and thus their zero positions do not overlap.  This means the weight of the component-wise product is $n{-}2$. For family B, $(r,r)\,{=}\,1$ since the row $r$ has an odd weight $p$. For $t>0$:
\begin{itemize}
    \item  when $t$ is co-prime with $p$, the strings $CS^t(r)$ and $r$ do not overlap anywhere, and thus $(CS^t(r),r)=0$;
    \item when $t$ is a multiple of $p$, the string $r$ overlaps with the string $CS^t(r)$ in all places except $(t{-}p)^\text{th}$ and $(p^2-p+t)^\text{th}$---both positions are considered $\bmod\, n$. This means the weight of the overlap is even, because it is equal to $p{+}1{-}2 = p{-}1$, where $p$ is an odd prime.  This means $(CS^t(r),r)=0$.
\end{itemize}

Since family A contains transformations where an output qubit depends on $n{-}1$ input qubits, by the light cone argument, any quantum circuit computing this output qubit (including circuits with ancillae) must have depth at least $\log(n{-}1) \in \Omega(\log(n))$.  Similarly, family B contains an output qubit that depends on about $\sqrt{n}$ input qubits.  This means it can be computed by the circuits of depth at least $\Omega(\log(n))$.

Since the square of transformations in families A and B is a cyclic shift, and a cyclic shift (as well as any permutation) can be implemented in depth $2$ by the $\swapgate$ gates or depth $6$ by the $\cnotgate$ gates, all while using no ancillae, the square has only a constant depth complexity. 

We further note that circuit families $A$ and $B$ can be extended to include $n!-1$ matrices each by allowing arbitrary permutations of rows, except the one permutation resulting in a matrix that squares to the identity.  All proofs and arguments hold, but the sets are now much larger, containing $O(2^{n \log(n)})$ elements each.
\end{proof}

\section{Simultaneous Optimization of Linear Reversible Parts of GF Multiplication and Division}\label{sec:div}

Quantum algorithms relying on Galois Field arithmetic \cite{proos2003shor, jordan2024optimization} can often require the implementation of both field multiplication and division operations, even if the use of the more expensive division operation is minimized, e.g., through the use of projective coordinates \cite{maslov2009m2}.  This section is devoted to developing irreducible polynomials and relevant $\cnotgate$ gate circuitry to optimize the implementations of field multiplication and division operations simultaneously.

To this end, we develop a set of irreducible polynomials and $\cnotgate$ circuits that simultaneously reduce the cost of the multiplication by the constant $1{+}x^{\lceil m/2 \rceil}$ and field squaring operation.  The irreducible polynomial we consider is of the form $x^m{+}x{+}1 + (x^{2l_k}{+}x^{2l_{k-1}}{+}...{+}x^{2l_1})$, where $1 \leq l_k<l_{k-1}<...<l_1$ and $k$ is as small as possible.  Information theoretically, it suffices to take $k \in O(\log(m))$ to guarantee the existence of such an irreducible polynomial with a high probability for a given $m$.  \alg{2} and \alg{4} offer cost-$O(m \log(m))$ implementations of the multiplication by the constant $1{+}x^{\lceil m/2 \rceil}$ and \alg{7} offers a cost-$O(m \log(m))$ implementation of the field squaring operation. It works as follows:
\begin{itemize}
    \item[1:] The first $\lfloor m/2 \rfloor$ columns of the matrix $M$ have weight 1. We do not touch them.
    \item[3-8:] Add first $\lfloor m/2 \rfloor$ columns into the next $\lfloor m/2 \rfloor - l_1$ columns so as to reduce their weight to $1$.
    \item[9:] Reduce the weight of the rows corresponding to weight-1 columns considered in previous steps to 1. 
    \item[10:] Diagonalize the leftover $l_1{\times}l_1$ matrix obtained from $M$ by removing the columns and rows considered in the previous steps.
    \item[11:] The leftover matrix has weight-1 rows only, and as such, it is a SWAPping transformation. 
\end{itemize}

{\centering
\begin{algorithm}[H]
\caption{Implementation of squaring operation over $\text{GF}(2^m)$ with the irreducible polynomial $x^m{+}x{+}1 + (x^{2l_k}{+}x^{2l_{k-1}}{+}...{+}x^{2l_1})$, where $1 \leq l_k<l_{k-1}<...<l_1$} \label{alg:7}
\begin{algorithmic}[1]	
    \State{Input: matrix $M=\{m_{i,j}\}_{i,j=0..n}$ of size $m{\times}m$}
    \State{Output: circuit reducing matrix $M$ to the identity using $O(m \log(m))$ $\cnotgate$ gates}
    \For{$i{=}0$ to $\lfloor m/2 \rfloor$}  \Comment{$\leq ml_1/2$ $\cnotgate$ gates}
        \State{Add column $i$ to column $\lfloor m/2 \rfloor + 1$}
        \For{each $l_i \neq 0$}
            \State{Add column $l_i$ to column $\lfloor m/2 \rfloor + l_i$ if $m_{i,\lfloor m/2 \rfloor + l_j}=1$}
        \EndFor
    \EndFor
    \State{Use weight-1 columns $0..m-l_1$ to reduce the weight of the corresponding rows, $m-l_1+1..m$, to zero} \Comment{$\leq(m-l_1)l_1$ $\cnotgate$ gates}
    \State{Diagonalize the $l_1{\times}l_1$ matrix spanning last $l_1$ columns and corresponding non-zero rows} \Comment{$l_1^2$ $\cnotgate$ gates}
    \State{We now have a leftover matrix where, by construction, each column has weight 1. This is a SWAPping of qubits; implement this transformation using standard techniques.} \Comment{${\leq}3(m-1)$ $\cnotgate$ gates}
\end{algorithmic}
\end{algorithm}
}

To implement field inversion \cite{itoh1988fast, banegas2020concrete}, the squaring operation is used in batches of $1, 2, 4, ..., m/2$.  For each batch, \cite{banegas2020concrete} uses an LUP decomposition resulting in $m^2 + O(m)$ $\cnotgate$ gates. Clearly, iterating our circuit of complexity $O(m \log(m))$ will offer a better implementation for smaller batches, however, the longer batches may offer complexity as high as $O(m^2 \log(m))$, thus requiring the overall $\cnotgate$ gate complexity of $O(m^2 \log(m))$, same as in \cite{banegas2020concrete}.  The complexity may, however, be reduced to $O(m^2\log\log(m)/\log(m))$ by employing $O(m^2/ \log(m))$ \cite{patel2008optimal} algorithm to implement batches $m/2, m/4, ... , m/\log^2(m)$ using the asymptotically optimal technique \cite{patel2008optimal} and the remaining batches through the iteration of our circuit with $O(m \log(m))$ gates.  

\section{Reducing the Toffoli Count}

Several ancilla-free constructions of GF($2^m$) multiplication can be found in the literature, including constructions based on Karatsuba \cite{van2019space} and constructions based on CRT \cite{kim2024toffoli}. The CRT-based construction \cite{kim2024toffoli}, while having a very low $\toffoligate$ count $O(m \cdot 2^{\log_2^*{m}})$, has several parts that contribute the quadratic number $\cnotgate$ gates.  On the other hand, in Karatsuba constructions, while still offering the overall $O(m^2)$ $\cnotgate$ count, the quadratic term comes from only one term, and it is the multiplication by a constant, which we reduced to linear and linearithmic complexities above.  Thus, we focus on Karatsuba and, more generally, divide-and-conquer-based constructions.  Note that the techniques introduced in this paper can also be used to improve the gate count of the CRT-based construction.

\subsection{Toom-Cook Construction}
Our construction of quantum Toom-Cook for GF($2^m$) did not lead to improved gate counts due to the need to recurse on subproblems of degree $\frac{m}{k} + k - 1$ rather than $\frac{m}{k}$.  The details of our construction and the resulting gate counts are thus reported in Appendix \ref{appendix:ancilla_free_toom_cook}.

\subsection{Karatsuba-like Construction}
Inspired by \cite[Section 3]{kim2024toffoli}, we examined the Karatsuba-like multiplication formulas \cite{Weimerskirch2006GeneralizationsOT}. This approach divides the input polynomials into $k$ pieces (leading to a product with $2k{-}1$ pieces) and evaluates $p$ different products, each of size $\lceil \frac{m}{k} \rceil$, and then interpolates the $2k{-}1$ pieces of the product. A formula $C {=} R[(TA) \circ (TB)]$ represents the algorithm, where $T$ and $R$ are Boolean matrices of sizes $p \,{\times}\, k$ and $(2k{-}1)\,{\times}\, p$ respectively. The complexity of the multiplication algorithm depends on $p$, and finding $p$ and the corresponding matrix pair that minimizes the number of multiplications is an NP-hard problem \cite{better_circuits}. 

We turned the $(T, R)$ matrix pairs for $k {\in} \{2,3,4,5\}$ from \cite{better_circuits} and for $k {\in} \{6,7,8\}$ from \cite{searching_for_best_karatsuba} into ancilla-free quantum multiplication circuits.  We vary $k$ for each level of recursion to minimize the weighted cost, leading to lower gate counts, see \tab{multiplication}.

To reduce the $\cnotgate$ count, notice that the order we apply the rows of $T$ (columns of $R$) does not matter and that if two columns in $R$ are the same up to a shift, then they can be performed together. For example, the $R$ matrix for $k=2$ (i.e. ``pure'' Karatsuba) has the first column being $[1, 1, 0]^T$ meaning that $p_0$ should be multiplied by $1 + x^n$ and the last columns being $[0, 1, 1]^T$ meaning that $p_2$ should be multiplied by $x^n + x^{2n}$.  This means we can reorder the operations and group $p_0$ and $p_2$ together to apply the action of $1{+}x^n$ once, leading to a Karatsuba construction equivalent to \cite{van2019space}. A similar strategy works for any $(T_k, R_k)$ pair.

\subsection{Parity Controlled Toffoli Gate (PCTOF)} \label{sec:pctof}
The PCTOF gate was introduced in \cite{khattar2025verifiablequantumadvantageoptimized} as both a shorthand for a multi-target $\toffoligate$ whose two controls are the EXOR-sum of sets of qubits and to signify that this combined operation can be executed efficiently with lattice surgery.

The PCTOF representation is also useful as a way to find optimizations to the $\toffoligate$ count, as follows.  Given a circuit made only of $n$ PCTOF($c_1^{(i)}, c_2^{(i)}, t^{i}$), let $C_1 = \bigcup c_1^{(i)}$, $C_2 = \bigcup c_2^{(i)}$ and $T = \bigcup t^{(i)}$. If $C_1$, $C_2$, and $T$ are mutually disjoint, then we can replace the $n$ PCTOFs with $r$ PCTOF gates where $r$ is the rank of the binary matrix $n \times (|C_1|{+}|C_2|)$ where $(k, i|C_2|+j)$ is set if the kth PCTOF has qubit $i$ in its $c_1$ and qubit $j$ in its $c_2$.

Starting from a circuit made of $\cnotgate$ and $\toffoligate$ gates, we can replace the $\toffoligate$ gates with PCTOF gates and commute the $\cnotgate$ gates to the left while updating the PCTOF gates.  This leads to a circuit with two halves, where the first is made of only $\cnotgate$ gates and the second is made of only PCTOF gates. These operations can be done in reverse to restore the original circuit.

Applying the above transformation to the GF($2^m$) multiplication circuit, we get $U_1$ containing only qubits that represent the first input, $U_2$ containing only qubits representing the second input, and $T$ containing only qubits representing the target register. This allows us to use the rank reduction technique, leading to improvements that are as high as $5\%$ but can be as low as $0.2\%$ in the $\toffoligate$ count for odd $m$, but having no effect on $m$ that are powers of $2$.

\subsection{Addition Chains}
We can use FLT to compute the multiplicative inverse through exponentiation. For binary fields, this means $x^{-1} = x^{2^m {-} 2}$, the exponent $2^m {-} 2$ is a huge number, and direct computation through repeated squaring will use $\mathcal{O}(m)$ multiplications. Instead, the Itoh-Tsujii algorithm performs this operation using ${\leq}\, 2 \log_2{(m{-}1)}$ multiplications by exploiting the binary representation of $m{-}1$.

The simplest definition of an addition chain for a target value $t$ is a sequence of numbers $A(t) = \{a_0{=}1, a_1, \cdots, a_l {=} t\}$ where $\forall_{k > 0} \exists_{i,j < k} a_k = a_i + a_j$. Given such a sequence $A(2^{m} - 2)$, we get the number of squaring operations as the number of indices $k$ where $\exists_{i < k} a_k = 2 a_i$, all other $k$'s will need both a general multiplication and an ancilla register. The Itoh-Tsujii algorithm implicitly defines an addition chain that requires ${\leq}\, 2\log_2{(m{-}1)}$ general multiplications. We refer the reader to \cite{Bernstein2025, dimitrov2013another, TaguchiTakayasu, Taguchi2024} for deeper discussions on the addition chains.

In this work, we used the addition chains developed by Dimitrov and Järvinen \cite{dimitrov2013another}, which can reduce the number of multiplications needed for inversion by up to 22\%.

\section{Computational Aspects}

\subsection{Irreducible Polynomials}
A fundamental aspect of our paper is the reliance on irreducible polynomials of specific forms.  Contrary to the established use of trinomials or pentanomials \cite{seroussi1998table}, which are efficient for classical computations, employing them has not offered short ancilla-free quantum circuit implementations yet.  Instead, we developed irreducible polynomials that work well with our circuit constructions.  The number of binary polynomials of degree $m$ is $2^m$, and the number of irreducible ones is equal to \cite[A001037]{oeis} $\approx \frac{2^m}{m}$.  The exponential number of generic irreducible polynomials allowed us to find polynomials of every pattern we developed, because the number of candidates we tested is sufficiently greater than $m$. 

Specifically, we looked for patterns that produced between $m$ and $m^2$ candidates and tested them for irreducibility.  We performed fast prototyping of ideas in Python using the Galois library \cite{Hostetter_Galois_2020} and computed irreducible polynomials for $m$ up to 10,000 using C++ and the FLINT library \cite{flint}. 

To speed up our code, we did the following:

\begin{enumerate}
    \item To reduce the number of calls made to the costly irreducibility check, we tested divisibility with 100 small irreducible polynomials, similar to how (integer) primality tests check for small primes.
    \item We stopped the search after finding a small number (five) of irreducible polynomials for each pattern we tried.
    \item We used both multi-threading and multi-processing.
\end{enumerate}

\subsection{Gate Counting}
For small field sizes ($m \leq 64$), we built the full multiplication and division circuits for each irreducible polynomial we have and ran a full sweep of rank-based optimization of $\toffoligate$ count and local optimizations for $\cnotgate$ count. 

For bigger fields, we built the constant multiplication and squaring circuits for each irreducible polynomial and ran local optimizations \cite{maslov2008quantum} to reduce their sizes.  Given the tuple `(const\_mul, squaring, weight)` we can use dynamic programming to calculate the exact $\toffoligate$ count and an upper bound on the $\cnotgate$ count (that may be reducible using local optimization on the full circuit), where `const\_mul` and `squaring` are the sizes of the constant multiplication and squaring circuits respectively and weight is the weight of the irreducible polynomial.

To reduce the runtime, we ran the dynamic programming part only on irreducible polynomials whose cost tuple is not dominated by any other cost tuple.

\section{Numeric results}
For a most meaningful comparison, we need to focus on a specific circuit costing metric.  Due to the new developments in quantum error correction, including $\tgate$ gate cultivation at the cost comparable to that of the $\cnotgate$ \cite{gidney2024magic}, the ratio of costs of (local) $\toffoligate$ vs (local) $\cnotgate$ gate can be estimated to be about $1{:}10$ to $1{:}7$ \cite{huggins2025fluid}.  Thus, the cost of the $\cnotgate$ is no longer irrelevant and must be accounted for.  

\begin{table}[th]
    \centering
    \begin{tabular}{l|lrr|lrr}
    \toprule
     m & Irreducible polynomial & $\toffoligate$ & $\cnotgate$ & Our irreducible polynomial & Our $\toffoligate$ & Our $\cnotgate$ \\
    \midrule
    2 \cite{van2019space}  &  $x^2+x+1$  &  3  &  9  &  $x^2+x+1$  & 3 &  \textbf{7} \\
    4 \cite{van2019space}  &  $x^4+x+1$  &  9  &  44  &  $x^4+x+1$  & 9 &  \textbf{41} \\
    8 \cite{JangSrivastavaBaksi}  &  $x^8+x^4+x^3+x+1$  &  27  &  102  &  $x^8+x^4+x^3+x+1$  & 27 & 177\\
    16 \cite{JangSrivastavaBaksi}  &  $x^{16}+x^5+x^3+x+1$  &  81  &  655  &  $x^{{16}}+x^5+x^3+x+1$  & 81 &  \textbf{615} \\
    32 \cite{van2019space}  &  $x^{32}+x^7+x^3+x^2+1$  &  243  &  2,238  &  $x^{{32}}+x^8+x^5+x^2+1$  & 243 &  \textbf{2,004} \\
    64 \cite{van2019space}  &  $x^{64}+x^4+x^3+x+1$  &  729  &  6,896  &  $x^{{64}}+x^4+x^3+x^2+1$  & 729 &  \textbf{6,117} \\
    127 \cite{JangSrivastavaBaksi}  &  $x^{127}+x+1$  &  2,185  &  20,300  &  $x^{{127}}+x^{{63}}+1$  &  \textbf{2,179}  &  \textbf{18,336} \\
    128 \cite{van2019space}  &  $x^{128}+x^7+x^2+x+1$  &  2,187  &  21,272  &  $x^{{128}}+x^{{21}}+x^{{20}}+x^{{19}}+1$  & 2,187 &  \textbf{18,894} \\
    163 \cite{JangSrivastavaBaksi}  &  $x^{163}+x^7+x^6+x^3+1$  &  4,387  &  36,439  &  $x^{{163}}+x^7+x^6+x^3+1$  &  \textbf{3,605}  &  \textbf{35,070} \\
    169* \cite{van2019space}  &  $x^{169}+x^{34}+1$  &  4,627  &  41,313  &  $x^{{169}}+x^{{84}}+1$  &  \textbf{3,791}  &  \textbf{35,491} \\
    191* \cite{van2019space}  &  $x^{191}+x^9+1$  &  5,099  &  44,184  &  $x^{{191}}+x^{{14}}+x^6+x^2+1$  &  \textbf{4,365}  &  \textbf{39,252} \\
    233 \cite{JangSrivastavaBaksi}  &  $x^{233}+x^{74}+1$  &  6,323  &  60,453  &  $x^{{233}}+x^{{10}}+x^5+x+1$  &  \textbf{6,204}  &  \textbf{54,223} \\
    239* \cite{van2019space}  &  $x^{239}+x^{36}+1$  &  6,395  &  64,426  &  $x^{{239}}+x^5+x^4+x^3+x^2+x+1$  &  \textbf{6,312}  &  \textbf{54,494} \\
    256 \cite{van2019space}  &  $x^{256}+x^{10}+x^5+x^2+1$  &  6,561  &  64,706  &  $x^{{256}}+x^{{33}}+x^{{32}}+x^{{31}}+1$  & 6,561 &  \textbf{57,434} \\
    283 \cite{JangSrivastavaBaksi}  &  $x^{283}+x^{12}+x^7+x^5+1$  &  10,273  &  87,929  &  $x^{{283}}+x^{{12}}+x^7+x^5+1$  &  \textbf{8,712}  &  \textbf{79,994} \\
    295* \cite{van2019space}  &  $x^{295}+ x^{48} + 1$  &  11,281  &  101,966  &  $x^{{295}}+x^{{147}}+1$  &  \textbf{9,443}  &  \textbf{87,834} \\
    359* \cite{van2019space}  &  $x^{359}+x^{68}+1$  &  14,753  &  139,202  &  $x^{{359}}+x^{{11}}+x^7+x^4+1$  &  \textbf{12,630}  &  \textbf{111,904} \\
    409* \cite{van2019space}  &  $x^{{409}}+x^{{87}}+1$  &  17,101  &  172,327  &  $x^{{409}}+x^{{203}}+x^{{202}}+x^{{201}}+x^{{200}}+x^{{199}}+1$  &  \textbf{15,093}  &  \textbf{145,593} \\
    545* \cite{van2019space}  &  $x^{545}+x^{122}+1$  &  27,971  &  283,015  &  $x^{{545}}+x^8+x^7+x^6+x^2+x+1$  &  \textbf{24,103}  &  \textbf{238,577} \\
    551* \cite{van2019space}  &  $x^{551}+x^{135} + 1$  &  29,123  &  292,840  &  $x^{{551}}+x^9+x^4+x+1$  &  \textbf{25,800}  &  \textbf{231,556} \\
    571 \cite{JangSrivastavaBaksi}  &  $x^{571}+x^{10}+x^5+x^2+1$  &  31,171  &  267,771  &  $x^{{571}}+x^{{10}}+x^5+x^2+1$  &  \textbf{26,208}  &  \textbf{238,900} \\
    1,024 \cite{van2019space}  &  $x^{1024}+x^{19}+x^6+x+1$  &  59,049  &  591,942  &  $x^{{1024}}+x^{{39}}+x^{{37}}+x^{{36}}+1$  & 59,049 &  \textbf{525,140} \\
    \bottomrule
    \end{tabular}
    \caption{$\toffoligate$ and $\cnotgate$ gate counts in ancilla-free implementation of GF$(2^m)$ multiplication. The asterisk * indicates inputs $m$ for which the gate counts were produced by our reimplementation of the code in \cite{van2019space}; for all other numbers, our code independently verified the results from \cite{van2019space, JangSrivastavaBaksi}.}
    \label{tab:multiplication}
\end{table}
 

\begin{table}[t]
    \centering
    \begin{tabular}{l|lrrr|lrrr}
    \toprule
    m & Poly \cite{banegas2020concrete} & $\toffoligate$ \cite{banegas2020concrete} & $\cnotgate$ \cite{banegas2020concrete} & AR \cite{banegas2020concrete} & Our Poly & Our $\toffoligate$ & Our $\cnotgate$ & Our AR\\
    \midrule
    2*  &  $x^2+x+1$  &  3  &  11  &  0  &  $x^2+x+1$  & 3 &  \textbf{7}  & 0\\
    4*  &  $x^4+x+1$  &  45  &  252  &  2  &  $x^4+x+1$  & 45 &  \textbf{217}  & 2\\
    8  &  $x^8+x^4+x^3+x+1$  &  243  &  2,212  &  4  &  $x^8+x^4+x^3+x+1$  & 243 &  \textbf{1,791}  & 4\\
    16  &  $x^{{16}}+x^5+x^3+x+1$  &  1,053  &  10,814  &  6  &  $x^{{16}}+x^5+x^3+x+1$  &  \textbf{891}  &  \textbf{8,444}  & 6\\
    32*  &  $x^{{32}}+x^7+x^3+x^2+1$  &  4,131  &  44,674  &  8  &  $x^{{32}}+x^8+x^5+x^2+1$  &  \textbf{3,645}  &  \textbf{38,377}  & 8\\
    64*  &  $x^{{64}}+x^4+x^3+x+1$  &  15,309  &  169,272  &  10  &  $x^{{64}}+x^4+x^3+x^2+1$  &  \textbf{12,393}  &  \textbf{121,120}  &  \textbf{9} \\
    127  &  $x^{{127}}+x+1$  &  50,255  &  502,870  &  11  &  $x^{{127}}+x^{{63}}+1$  &  \textbf{41,401}  &  \textbf{382,060}  &  \textbf{10} \\
    128*  &  $x^{{128}}+x^7+x^2+x+1$  &  54,675  &  633,792  &  12  &  $x^{{128}}+x^7+x^2+x+1$  &  \textbf{45,927}  &  \textbf{538,810}  &  \textbf{11} \\
    163  &  $x^{{163}}+x^7+x^6+x^3+1$  &  83,353  &  906,170  &  9  &  $x^{{163}}+x^7+x^6+x^3+1$  &  \textbf{68,495}  &  \textbf{829,466}  & 9\\
    169*  &  $x^{{169}}+x^{{34}}+1$  &  87,913  &  872,279  &  9  &  $x^{{169}}+x^{{84}}+1$  &  \textbf{72,029}  &  \textbf{753,401}  & 9\\
    191*  &  $x^{{191}}+x^9+1$  &  127,475  &  1,172,714  &  12  &  $x^{{191}}+x^9+1$  &  \textbf{91,665}  &  \textbf{904,228}  &  \textbf{11} \\
    233  &  $x^{{233}}+x^{{74}}+1$  &  132,783  &  1,486,464  &  10  &  $x^{{233}}+x^{{74}}+1$  &  \textbf{130,284}  &  \textbf{1,286,813}  & 10\\
    239*  &  $x^{{239}}+x^{{36}}+1$  &  159,875  &  1,769,182  &  12  &  $x^{{239}}+x^{{36}}+1$  &  \textbf{157,800}  &  \textbf{1,552,052}  & 12\\
    256*  &  $x^{{256}}+x^{{10}}+x^5+x^2+1$  &  190,269  &  2,186,950  &  14  &  $x^{{256}}+x^{{10}}+x^5+x^2+1$  &  \textbf{137,781}  &  \textbf{1,621,158}  &  \textbf{11} \\
    283  &  $x^{{283}}+x^{{12}}+x^7+x^5+1$  &  236,279  &  2,708,404  &  11  &  $x^{{283}}+x^{{12}}+x^7+x^5+1$  &  \textbf{200,376}  &  \textbf{2,334,404}  & 11\\
    295*  &  $x^{{295}}+x^{{48}}+1$  &  259,463  &  2,628,060  &  11  &  $x^{{295}}+x^{{147}}+1$  &  \textbf{198,303}  &  \textbf{2,026,538}  & 11\\
    359*  &  $x^{{359}}+x^{{68}}+1$  &  368,825  &  3,848,304  &  12  &  $x^{{359}}+x^{{68}}+1$  &  \textbf{315,750}  &  \textbf{3,241,226}  & 12\\
    409*  &  $x^{{409}}+x^{{87}}+1$  &  393,323  &  4,304,131  &  11  &  $x^{{409}}+x^{{87}}+1$  &  \textbf{316,953}  &  \textbf{3,646,883}  & 11\\
    545*  &  $x^{{545}}+x^{{122}}+1$  &  587,391  &  6,910,325  &  10  &  $x^{{545}}+x^{{122}}+1$  &  \textbf{506,163}  &  \textbf{6,197,937}  & 10\\
    551*  &  $x^{{551}}+x^{{135}}+1$  &  728,075  &  8,066,560  &  12  &  $x^{{551}}+x^{{135}}+1$  &  \textbf{645,000}  &  \textbf{6,737,686}  & 12\\
    571  &  $x^{{571}}+x^{{10}}+x^5+x^2+1$  &  814,617  &  10,941,536  &  13  &  $x^{{571}}+x^{{10}}+x^5+x^2+1$  &  \textbf{653,525}  &  \textbf{8,941,016}  & 13\\
    1024*  &  $x^{{1024}}+x^{{19}}+x^6+x+1$  &  2,184,813  &  28,318,894  &  18  &  $x^{{1024}}+x^{{39}}+x^{{37}}+x^{{36}}+1$  &  \textbf{1,594,323}  &  \textbf{22,961,736}  &  \textbf{14} \\
    \bottomrule
    \end{tabular}
    \caption{$\toffoligate$, $\cnotgate$ gate counts and number of Ancilla Registers (AR) in Fermat's Little Theorem-based implementation of the GF$(2^m)$ division.  The asterisk * indicates inputs $m$ for which the gate counts were produced by our reimplementation of the code in \cite{banegas2020concrete}; for all other numbers, our code independently verified the original results \cite{banegas2020concrete}.}
    \label{tab:division}
\end{table}

In \tab{multiplication}, we compare our results for multiplication with \cite{van2019space} and \cite{JangSrivastavaBaksi}. In \tab{division}, we compare our results for GF division to those in \cite{banegas2020concrete}, where we were able to independently verify the constructions from each.  In our paper, we used the numbers of inputs $m$ employed by the previous authors, and also added cryptographically relevant numbers from \cite{certicom, dimitrov2013another} as well as numbers 169, 295, 545, and 551 where our linear circuit construction for the multiplication by the constant $1{+}x^{\lceil m/2 \rceil}$ turned out to be substantially more efficient than the one from \cite{van2019space}.  For field sizes where none of them report a result, we used our reproduction of their construction, where we use the multiplication circuit from \cite{van2019space} and the FLT-construction from \cite{banegas2020concrete} with the irreducible polynomials from \cite{seroussi1998table}.
 



\section{Discussion}

We tackled a well-studied problem of implementing GF arithmetic as quantum/reversible circuits, and yet found both asymptotic and practical optimizations.  We showed that GF multiplication can be implemented using $O\left(m^{\log_2{3}}\right)$ gates with no ancilla, improving the best previously known asymptotic gate count of $O(m^2)$ \cite{van2019space, putranto2023depth, kim2024toffoli}.  Numeric results, \tab{multiplication}, highlight optimization of the $\cnotgate$ gate count by up to 20.9\% and $\toffoligate$ gate count by up to 18.06\% for cryptographically relevant numbers $m$; the $\cnotgate$ gate reduction is expected to grow with $m$ due to improvement in asymptotics.  Our optimization was achieved by reducing the complexity of the bottleneck part performing in-place multiplication by $1+x^{\lceil m/2 \rceil}$ from $O(m^2)$ to $O(m)$ gates.  The leading constant in front of our linear cost construction is upper bounded by $4.16$, leading to the gate count reduction by as much as a factor of 357 for $m$ up to 10,000.  Our construction is useful not only in quantum computation, but it implies that $\text{GF}(2^m)$ multiplication of two $m$-bit registers can be accomplished by a classical program with complexity $O\left(m^{\log_2{3}}\right)$ and access to a total of only $3m$ bits of memory.

We note that the focus on other circuit cost metrics may result in different constructions.  For instance, $\text{GF}(2^m)$ multiplication can be implemented in $\tgate$ depth $1$ by Mastrovito multiplier \cite{mastrovito1988vlsi, maslov2009m2}.  Indeed, the circuit consists of three parts: $\toffoligate$, $\cnotgate$, and $\toffoligate$, where Toffolis have controls on the top two multiregisters $a$ and $b$, and $\cnotgate$ gates operate on the bottom third multiregister $c$.  Such a circuit can be rewritten as $\hgate^c.\cczgate^{abc}.\hgate^c.\cnotgate^{c\downarrow}.\hgate^c.\cczgate^{abc}.\hgate^c = \hgate^c.\cczgate^{abc}.\cnotgate^{c\uparrow}.\cczgate^{abc}.\hgate^c$, where $\cnotgate^{c\uparrow}$ is obtained from $\cnotgate^{c\downarrow}$ by replacing targets with controls and controls with targets in the original circuit.  The part $\cczgate^{abc}.\cnotgate^{c\uparrow}.\cczgate^{abc}$ is a $\cnotgate{+}\tgate$ circuit, and thus can be written in $\tgate$ depth $1$ \cite{amy2013meet}.  We do not consider such a circuit practical since it relies on $O(m^2)$ ancillae and $O(m^2)$ gates.

Our construction of the GF division improves the gate count complexity from $O(m^2\log(m))$ to $O(m^2\log\log(m)/\log(m))$.  This was achieved through selecting an irreducible polynomial that allows simultaneous implementation of two technical computing primitives, in-place multiplication by the constant $1{+}x^{\lceil m/2 \rceil}$ and in-place field squaring operation, each at the cost of $O(m\log(m))$ gates.  In practice, the $\cnotgate$ and $\toffoligate$ gate counts can be reduced by 28\% (e.g. $m{=}64$) and 28\% (e.g. $m{=}191$) respectively, while occasionally reducing the number of ancillary registers by up to 22\% (e.g. $m{=}1024$), see \tab{division}.

Finally, we reported an example of a unitary $U$ and its root $\sqrt{U}$ such that both $\sqrt{U}$ and $U$ can be implemented using the $\cnotgate$ gates alone, and yet $\sqrt{U}$ requires asymptotically higher depth than $U$ even if ancillae are allowed.  This offers an example of a scenario where a root is more complex than the unitary $U$ itself.

\section{Data Availability}
Circuits created for \tab{multiplication} and \tab{division} are 
\href{https://doi.org/10.5281/zenodo.19193656}{available on Zenodo}. 

\section*{Acknowledgments}
We would like to thank Vassil Dimitrov, University of Calgary and Lemurian Labs, and Kimmo Jarvinen, Xiphera Inc., for teaching us the addition chains and for the excellent discussions about the Itoh-Tsujii algorithm.

\bibliographystyle{plain}
\bibliography{biblio}

@inproceedings{mastrovito1988vlsi,
  title={{VLSI} designs for multiplication over finite fields {GF}$(2^m)$},
  author={Mastrovito, Edoardo D.},
  booktitle={International Conference on Applied Algebra, Algebraic Algorithms, and Error-Correcting Codes},
  pages={297--309},
  year={1988},
  organization={Springer}
}

@article{maslov2009m2,
  title={An ${O}(m^2)$-depth quantum algorithm for the elliptic curve discrete logarithm problem over {GF}$(2^m)$},
  author={Maslov, Dmitri and Mathew, Jimson and Cheung, Donny and Pradhan, Dhiraj K.},
  journal={Quantum Information \& Computation},
  volume={9},
  number={7},
  pages={610--621},
  year={2009},
  publisher={Rinton Press, Incorporated Paramus, NJ}
}

@article{amy2013meet,
  title={A meet-in-the-middle algorithm for fast synthesis of depth-optimal quantum circuits},
  author={Amy, Matthew and Maslov, Dmitri and Mosca, Michele and Roetteler, Martin},
  journal={IEEE Transactions on Computer-Aided Design of Integrated Circuits and Systems},
  volume={32},
  number={6},
  pages={818--830},
  year={2013},
  publisher={IEEE}
}

@article{patel2008optimal,
  title={Optimal synthesis of linear reversible circuits},
  author={Patel, Ketan N. and Markov, Igor L. and Hayes, John P.},
  journal={Quantum Information \& Computation},
  volume={8},
  number={3},
  pages={282--294},
  year={2008},
  publisher={Citeseer}
}

@article{itoh1989structure,
  title={Structure of parallel multipliers for a class of fields {GF}$(2^m)$},
  author={Itoh, Toshiya and Tsujii, Shigeo},
  journal={Information and Computation},
  volume={83},
  number={1},
  pages={21--40},
  year={1989},
  publisher={Elsevier}
}

@article{takagi2001fast,
  title={A fast algorithm for multiplicative inversion in {GF}$(2^m)$ using normal basis},
  author={Takagi, Naofumi and Yoshiki, Jun-ichi and Takagi, Kazuyoshi},
  journal={IEEE Transactions on Computers},
  volume={50},
  number={5},
  pages={394--398},
  year={2001},
  publisher={IEEE}
}

@article{amento2012efficient,
  title={Efficient quantum circuits for binary elliptic curve arithmetic: reducing {T}-gate complexity},
  author={Amento, Brittanney and Steinwandt, Rainer and Roetteler, Martin},
  journal={arXiv preprint arXiv:1209.6348},
  year={2012}
}

@inproceedings{karatsuba1962multiplication,
  title={Multiplication of many-digital numbers by automatic computers},
  author={Karatsuba, Anatolii Alekseevich and Ofman, Yu. P.},
  booktitle={Doklady Akademii Nauk},
  volume={145},
  number={2},
  pages={293--294},
  year={1962},
  organization={Russian Academy of Sciences}
}

@article{gidney2019asymptotically,
  title={Asymptotically efficient quantum {K}aratsuba multiplication},
  author={Gidney, Craig},
  journal={arXiv preprint arXiv:1904.07356},
  year={2019}
}

@article{van2019space,
  title={Space-efficient quantum multiplication of polynomials for binary finite fields with sub-quadratic {T}offoli gate count},
  author={Van Hoof, Iggy},
  journal={arXiv preprint arXiv:1910.02849},
  year={2019}
}

@article{chebolu2011counting,
  title={Counting irreducible polynomials over finite fields using the inclusion-exclusion principle},
  author={Chebolu, Sunil K. and Min{\'a}{\v{c}}, J{\'a}n},
  journal={Mathematics Magazine},
  volume={84},
  number={5},
  pages={369--371},
  year={2011},
  publisher={Taylor \& Francis}
}

@article{shende2008cnot,
  title={On the {CNOT}-cost of {T}OFFOLI gates},
  author={Shende, Vivek V. and Markov, Igor L.},
  journal={arXiv preprint arXiv:0803.2316},
  year={2008}
}

@article{low2019hamiltonian,
  title={Hamiltonian simulation by qubitization},
  author={Low, Guang Hao and Chuang, Isaac L.},
  journal={Quantum},
  volume={3},
  pages={163},
  year={2019},
  publisher={Verein zur F{\"o}rderung des Open Access Publizierens in den Quantenwissenschaften}
}

@article{childs2018toward,
  title={Toward the first quantum simulation with quantum speedup},
  author={Childs, Andrew M. and Maslov, Dmitri and Nam, Yunseong and Ross, Neil J. and Su, Yuan},
  journal={Proceedings of the National Academy of Sciences},
  volume={115},
  number={38},
  pages={9456--9461},
  year={2018},
  publisher={National Academy of Sciences}
}

@article{suzuki1991general,
  title={General theory of fractal path integrals with applications to many-body theories and statistical physics},
  author={Suzuki, Masuo},
  journal={Journal of Mathematical Physics},
  volume={32},
  number={2},
  pages={400--407},
  year={1991},
  publisher={American Institute of Physics}
}

@article{proos2003shor,
  title={Shor's discrete logarithm quantum algorithm for elliptic curves},
  author={Proos, John and Zalka, Christof},
  journal={arXiv preprint quant-ph/0301141},
  year={2003}
}

@article{itoh1988fast,
  title={A fast algorithm for computing multiplicative inverses in {GF}$(2^m)$ using normal bases},
  author={Itoh, Toshiya and Tsujii, Shigeo},
  journal={Information and Computation},
  volume={78},
  number={3},
  pages={171--177},
  year={1988},
  publisher={Elsevier}
}

@article{banegas2020concrete,
  title={Concrete quantum cryptanalysis of binary elliptic curves},
  author={Banegas, Gustavo and Bernstein, Daniel J and Van Hoof, Iggy and Lange, Tanja},
  journal={IACR Transactions on Cryptographic Hardware and Embedded Systems},
  pages={451--472},
  year={2021}
}

@article{certicom,
    title = {The {C}erticom {ECC} Challenge},
    author = {Certicom},
    journal = {https://www.certicom.com/content/certicom/en/the-certicom-ecc-challenge.html, last accessed 5/9/2025}
}

@article{jordan2024optimization,
  title={Optimization by decoded quantum interferometry},
  author={Jordan, Stephen P and Shutty, Noah and Wootters, Mary and Zalcman, Adam and Schmidhuber, Alexander and King, Robbie and Isakov, Sergei V and Khattar, Tanuj and Babbush, Ryan},
  journal={Nature},
  volume={646},
  number={8086},
  pages={831--836},
  year={2025},
  publisher={Nature Publishing Group UK London}
}

@article{kahanamoku2022classically,
  title={Classically verifiable quantum advantage from a computational {B}ell test},
  author={Kahanamoku-Meyer, Gregory D. and Choi, Soonwon and Vazirani, Umesh V. and Yao, Norman Y.},
  journal={Nature Physics},
  volume={18},
  number={8},
  pages={918--924},
  year={2022},
  publisher={Nature Publishing Group UK London}
}

@article{gidney2024magic,
  title={Magic state cultivation: growing {T} states as cheap as {CNOT} gates},
  author={Gidney, Craig and Shutty, Noah and Jones, Cody},
  journal={arXiv preprint arXiv:2409.17595},
  year={2024}
}

@misc{oeis,
    Author = {{OEIS Foundation Inc.}},
    Note = {Published electronically at \url{http://oeis.org}},
    Title = {The {O}n-{L}ine {E}ncyclopedia of {I}nteger {S}equences}}

@software{Hostetter_Galois_2020,
   title = {{Galois: A performant NumPy extension for Galois Fields}},
   author = {Hostetter, Matt},
   month = {11},
   year = {2020},
   url = {https://github.com/mhostetter/galois},
}

@manual{flint,
  key = {{FLINT}},
  author = {The {FLINT} team},
  title = {{FLINT}: {F}ast {L}ibrary for {N}umber {T}heory},
  year = {2025},
  note = {Version 3.2.1, \url{https://flintlib.org}}
}

@book{seroussi1998table,
  title={Table of low-weight binary irreducible polynomials},
  author={Seroussi, Gadiel},
  year={1998},
  publisher={Hewlett-Packard Laboratories Palo Alto, California}
}

@article{kepley2015quantum,
  title={Quantum circuits for $\mathbb{F}_2^n$-multiplication with subquadratic gate count},
  author={Kepley, Shane and Steinwandt, Rainer},
  journal={Quantum Information Processing},
  volume={14},
  pages={2373--2386},
  year={2015},
  publisher={Springer}
}

@article{huggins2025fluid,
  title={The {F}Luid {A}llocation of {S}urface code {Q}ubits ({FLASQ}) cost model for early fault-tolerant quantum algorithms},
  author={Huggins, William J. and Khattar, Tanuj and Xu, Amanda and Harrigan, Matthew and Kang, Christopher and Low, Guang Hao and Fowler, Austin and Rubin, Nicholas C. and Babbush, Ryan},
  journal={arXiv preprint arXiv:2511.08508},
  year={2025}
}

@article{kim2024toffoli,
  title={Toffoli gate count optimized space-efficient quantum circuit for binary field multiplication},
  author={Kim, Sunyeop and Kim, Insung and Kim, Seonggyeom and Hong, Seokhie},
  journal={Quantum Information Processing},
  volume={23},
  number={10},
  pages={330},
  year={2024},
  publisher={Springer}
}

@article{putranto2023depth,
  title={Depth-optimization of quantum cryptanalysis on binary elliptic curves},
  author={Putranto, Dedy Septono Catur and Wardhani, Rini Wisnu and Larasati, Harashta Tatimma and Ji, Janghyun and Kim, Howon},
  journal={IEEE Access},
  volume={11},
  pages={45083--45097},
  year={2023},
  publisher={IEEE}
}

@article{maslov2008quantum,
  title={Quantum circuit simplification and level compaction},
  author={Maslov, Dmitri and Dueck, Gerhard W. and Miller, D. Michael and Negrevergne, Camille},
  journal={IEEE Transactions on Computer-Aided Design of Integrated Circuits and Systems},
  volume={27},
  number={3},
  pages={436--444},
  year={2008},
  publisher={IEEE}
}

@InProceedings{zimmermann_faster_multiplication,
author="Brent, Richard P.
and Gaudry, Pierrick
and Thom{\'e}, Emmanuel
and Zimmermann, Paul",
editor="van der Poorten, Alfred J.
and Stein, Andreas",
title="Faster Multiplication in {GF}(2)[x]",
booktitle="Algorithmic Number Theory",
year="2008",
publisher="Springer Berlin Heidelberg",
address="Berlin, Heidelberg",
pages="153--166",
abstract="In this paper, we discuss an implementation of various algorithms for multiplying polynomials in : variants of the window methods, Karatsuba's, Toom-Cook's, Sch{\"o}nhage's and Cantor's algorithms. For most of them, we propose improvements that lead to practical speedups.",
isbn="978-3-540-79456-1"
}

@InProceedings{towards_optimal_toom_cook,
author="Bodrato, Marco",
editor="Carlet, Claude
and Sunar, Berk",
title="Towards Optimal {T}oom-{C}ook Multiplication for Univariate and Multivariate Polynomials in Characteristic 2 and 0",
booktitle="Arithmetic of Finite Fields",
year="2007",
publisher="Springer Berlin Heidelberg",
address="Berlin, Heidelberg",
pages="116--133",
abstract="Toom-Cook strategy is a well-known method for building algorithms to efficiently multiply dense univariate polynomials. Efficiency of the algorithm depends on the choice of interpolation points and on the exact sequence of operations for evaluation and interpolation. If carefully tuned, it gives the fastest algorithm for a wide range of inputs.",
isbn="978-3-540-73074-3"
}

@article{Larasati2021QuantumCD,
  title={Quantum Circuit Design of {T}oom 3-Way Multiplication},
  author={Harashta Tatimma Larasati and Asep Muhamad Awaludin and Janghyun Ji and Howon Kim},
  journal={Applied Sciences},
  year={2021},
  volume={11},
  pages={3752},
  url={https://api.semanticscholar.org/CorpusID:234863981}
}

@article{Dutta_2018,
   title={Quantum circuits for {T}oom-{C}ook multiplication},
   volume={98},
   ISSN={2469-9934},
   url={http://dx.doi.org/10.1103/PhysRevA.98.012311},
   DOI={10.1103/physreva.98.012311},
   number={1},
   journal={Physical Review A},
   publisher={American Physical Society (APS)},
   author={Dutta, Srijit and Bhattacharjee, Debjyoti and Chattopadhyay, Anupam},
   year={2018},
   month=jul }

@article{kahanamokumeyer2024fastquantumintegermultiplication,
      title={Fast quantum integer multiplication with zero ancillas}, 
      author={Gregory D. Kahanamoku-Meyer and Norman Y. Yao},
      year={2024},
      journal={arXiv preprint arXiv:2403.18006},
      eprint={2403.18006},
      archivePrefix={arXiv},
      primaryClass={quant-ph},
      url={https://arxiv.org/abs/2403.18006}, 
}

@ARTICLE{better_circuits,
  author={Find, Magnus Gaudal and Peralta, René},
  journal={IEEE Transactions on Computers}, 
  title={Better Circuits for Binary Polynomial Multiplication}, 
  year={2019},
  volume={68},
  number={4},
  pages={624-630},
  keywords={Logic gates;Symmetric matrices;Additives;Complexity theory;Elliptic curve cryptography;Upper bound;Binary polynomial multiplication;circuits;symmetric bilinear circuits},
  doi={10.1109/TC.2018.2874662}}

@InProceedings{searching_for_best_karatsuba,
author="{\c{C}}al{\i}k, {\c{C}}a{\u{g}}da{\c{s}}
and Dworkin, Morris
and Dykas, Nathan
and Peralta, Rene",
editor="Kotsireas, Ilias
and Pardalos, Panos
and Parsopoulos, Konstantinos E.
and Souravlias, Dimitris
and Tsokas, Arsenis",
title="Searching for Best {K}aratsuba Recurrences",
booktitle="Analysis of Experimental Algorithms",
year="2019",
publisher="Springer International Publishing",
address="Cham",
pages="332--342",
abstract="Efficient circuits for multiplication of binary polynomials use what are known as Karatsuba recurrences. These methods divide the polynomials of size (i.e. number of terms) {\$}{\$}k {\backslash}cdot n{\$}{\$} into k pieces of size n. Multiplication is performed by treating the factors as degree-{\$}{\$}(k-1){\$}{\$} polynomials, with multiplication of the pieces of size n done recursively. This yields recurrences of the form {\$}{\$} M(k n) {\backslash}le {\backslash}alpha M(n) + {\backslash}beta n + {\backslash}gamma ,{\$}{\$} where M(t) is the number of binary operations necessary and sufficient for multiplying two binary polynomials with t terms each. Efficiently determining the smallest achievable values of (in order) {\$}{\$}{\backslash}alpha , {\backslash}beta , {\backslash}gamma {\$}{\$} is an unsolved problem. We describe a search method that yields improvements to the best known Karatsuba recurrences for k = 6, 7 and 8. This yields improvements on the size of circuits for multiplication of binary polynomials in a range of practical interest.",
isbn="978-3-030-34029-2"
}

@article{Weimerskirch2006GeneralizationsOT,
  title={Generalizations of the {K}aratsuba Algorithm for Efficient Implementations},
  author={Andr{\'e} Weimerskirch and Christof Paar},
  journal={IACR Cryptol. ePrint Arch.},
  year={2006},
  volume={2006},
  pages={224},
  url={https://api.semanticscholar.org/CorpusID:16301209}
}

@article{khattar2025verifiablequantumadvantageoptimized,
      title={Verifiable Quantum Advantage via Optimized {DQI} Circuits}, 
      author={Tanuj Khattar and Noah Shutty and Craig Gidney and Adam Zalcman and Noureldin Yosri and Dmitri Maslov and Ryan Babbush and Stephen P. Jordan},
      year={2025},
      journal={arXiv preprint arXiv:2510.10967},
      eprint={2510.10967},
      archivePrefix={arXiv},
      primaryClass={quant-ph},
      url={https://arxiv.org/abs/2510.10967}, 
}

@article{vandaele2025quantumbinaryfieldmultiplication,
      title={Quantum binary field multiplication with subquadratic {T}offoli gate count and low space-time cost}, 
      author={Vivien Vandaele},
      year={2025},
      journal={arXiv preprint arXiv:2501.16136},
      eprint={2501.16136},
      archivePrefix={arXiv},
      primaryClass={quant-ph},
      url={https://arxiv.org/abs/2501.16136}, 
}

@article{roetteler2013discretelogarithms,
      title={A quantum circuit to find discrete logarithms on ordinary binary elliptic curves in depth $\mathcal{O}(log^2 n$)}, 
      author={Martin Roetteler and Rainer Steinwandt},
      year={2013},
      journal={arXiv preprint arXiv:1306.1161},
      eprint={1306.1161},
      archivePrefix={arXiv},
      primaryClass={quant-ph},
      url={https://arxiv.org/abs/1306.1161}, 
}

@misc{BY_GCD,
      author = {Daniel J.  Bernstein and Bo-Yin Yang},
      title = {Fast constant-time gcd computation and modular inversion},
      howpublished = {Cryptology {ePrint} Archive, Paper 2019/266},
      year = {2019},
      url = {https://eprint.iacr.org/2019/266}
}

@misc{JangSrivastavaBaksi,
      author = {Kyungbae Jang and Vikas Srivastava and Anubhab Baksi and Santanu Sarkar and Hwajeong Seo},
      title = {New Quantum Cryptanalysis of Binary Elliptic Curves (Extended Version)},
      howpublished = {Cryptology {ePrint} Archive, Paper 2025/017},
      year = {2025},
      doi = {10.46586/tches.v2025.i2.781-804},
      url = {https://eprint.iacr.org/2025/017}
}

@misc{TaguchiTakayasu,
      author = {Ren Taguchi and Atsushi Takayasu},
      title = {On the Untapped Potential of the Quantum {FLT}-based Inversion},
      howpublished = {Cryptology {ePrint} Archive, Paper 2024/228},
      year = {2024},
      doi = {10.1007/978-3-031-54773-7_4},
      url = {https://eprint.iacr.org/2024/228}
}

@inproceedings{dimitrov2013another,
  title={Another look at inversions over binary fields},
  author={Dimitrov, Vassil and J{\"a}rvinen, Kimmo},
  booktitle={2013 IEEE 21st Symposium on Computer Arithmetic},
  pages={211--218},
  year={2013},
  organization={IEEE}
}

@Article{Bernstein2025,
author={Bernstein, Daniel J.
and Cottaar, Jolijn
and Lange, Tanja},
title={Searching for differential addition chains},
journal={Research in Number Theory},
year={2025},
month={Mar},
day={27},
volume={11},
number={2},
pages={45},
abstract={The literature sometimes uses slow algorithms to find minimum-length continued-fraction differential addition chains to speed up subsequent computations of multiples of points on elliptic curves. This paper introduces two faster algorithms to find these chains. The first algorithm prunes more effectively than previous algorithms. The second algorithm uses a meet-in-the-middle approach and appears to have a limiting cost exponent below 1.},
issn={2363-9555},
doi={10.1007/s40993-024-00604-8},
url={https://doi.org/10.1007/s40993-024-00604-8}
}

@Article{Taguchi2024,
author={Taguchi, Ren
and Takayasu, Atsushi},
title={Concrete quantum cryptanalysis of binary elliptic curves via addition chain},
journal={Quantum Information Processing},
year={2024},
month={Mar},
day={25},
volume={23},
number={4},
pages={122},
abstract={Thus far, several papers reported concrete resource estimates of Shor's quantum algorithm for solving the elliptic curve discrete logarithm problem. In this paper, we study quantum FLT-based inversion algorithms over binary elliptic curves. There are two major algorithms proposed by Banegas et al. and Putranto et al., where the former and latter algorithms achieve fewer numbers of qubits and smaller depths of circuits, respectively. We propose two quantum FLT-based inversion algorithms that essentially outperform previous FLT-based algorithms and compare the performance for NIST curves of the degree n. Specifically, for all n, our first algorithm achieves fewer qubits than Putranto et al.'s one without sacrificing the number of Toffoli gates and the depth of circuits, while our second algorithm achieves smaller depths of circuits without sacrificing the number of qubits and Toffoli gates. For example, when {\$}{\$}n = 571{\$}{\$}, the number of qubits of our first algorithm is 74 {\%} of that of Putranto et al.'s one, while the depth of our second algorithm is 83 {\%} of that of Banegas et al.'s one. The improvements stem from the fact that FLT-based inversions can be performed with arbitrary sequences of addition chains for {\$}{\$}n - 1{\$}{\$}although both Banegas et al. and Putranto et al. follow fixed sequences that were introduced by Itoh and Tsujii's classical FLT-based inversion. In particular, we analyze how several properties of addition chains, which do not affect the computational resources of classical FLT-based inversions, affect the computational resources of quantum FLT-based inversions and find appropriate sequences.},
issn={1573-1332},
doi={10.1007/s11128-024-04323-y},
url={https://doi.org/10.1007/s11128-024-04323-y}
}

\appendix

\section{Ancilla-free Multiplication Circuit for Toom-Cook Algorithm over $GF(2^m)$} \label{appendix:ancilla_free_toom_cook}

\subsection{Background}
Few previous studies \cite{Larasati2021QuantumCD, Dutta_2018, kahanamokumeyer2024fastquantumintegermultiplication} utilized the Toom-Cook algorithm for integer multiplication. The first two papers used a polynomial number of ancillae, while the third requires no ancilla but does not explicitly offer circuit construction. 

Polynomials over GF$(2)$ have exactly three possible evaluation points $\{0, 1, \infty\}$ that are enough for Karatsuba, giving the evaluation points $f_0 g_0$, $(f_0+f_1)(g_0 + g_1)$, and $f_1 g_1$, respectively. For Toom-Cook (Toom-$k$), we need $2k{-}1$ points, since we divide the input polynomials $f(x)$ and $g(x)$ into $k$ parts $f(x) = \sum_{i=0}^{k-1} f_i x^{i \frac{m}{k}}$ whose product $h(x)$ has $2k{-}1$ parts $h(x) = \sum_{i=0}^{2k-2} w_i x^{i \frac{m}{k}}$.  In \cite{zimmermann_faster_multiplication}, Zimmermann and Quercia discovered that they can use any polynomial substitutions for $x^{m/k}$ and for Toom-3, they evaluated the polynomial at $\{0, 1, \infty, x, x^{-1}\}$.  In \cite{towards_optimal_toom_cook}, Bodrato generalized the evaluation points to be any rational function $\frac{A(x)}{B(x)}$ corresponding to $f(A(x), B(x)) \to \sum_{i=0}^{k-1} f_i A(x)^i B(x)^{k-i-1}$ and showed that the evaluation points $\{(0, 1), (1, 1), (1, 0), (x, 1), (x+1, 1)\} \to \{0, 1, \infty, x, x+1\}$ are classically optimal.

A downside of using polynomial evaluation points is that we do not recurse on a problem of size $\frac{m}{k}$ but on a problem of size $\frac{m}{k} + t(k-1)$ where $t$ is the maximum of the degrees of $A(x)$ and $B(x)$. For example, for points $(1, x)$ and $(x, 1)$ we recurse on problems of size $\frac{m}{k} + k-1$.

\subsection{Construction}
Given a list of evaluation points $\frac{A_0(x)}{B_0(x)}, \frac{A_1(x)}{B_1(x)}, \cdots, \frac{A_{2k-1}(x)}{B_{2k-1}(x)}$ leading to products $p_0, p_1, \cdots, p_{2k-1}$, we can compute the coefficients $w_i$ using the Vandermonde matrix $V$ as $P = VW \implies W = V^{-1}P$

\begin{equation}
    V = \begin{bmatrix}
        B_0^{2k-1}& \cdots& A_0^{2k-1}\\
        \vdots{} & \ddots{} & \vdots{}\\
        B_{2k-1}^{2k-1} & \cdots & A_{2k-1}^{2k-1}
    \end{bmatrix}
\end{equation}

In order to avoid using ancillae, we skip the computation of the $w_i$ values and compute the product directly as $e V^{-1} P$ where $e = [1, x^{m/k}, \cdots, x^{\frac{2k-2}{k}m}]$. This is similar to Appendix A of \cite{kahanamokumeyer2024fastquantumintegermultiplication} and leads to the Karatsuba algorithm in the same way and reproduces the construction in \cite{van2019space} for $GF(2)[x]$.

{\centering
\begin{algorithm}[H]
\caption{A general framework for constructing Toom-$k$ circuits for GF(2)[x] multiplication} \label{alg:toom_cook_framework}
\begin{algorithmic}[1]	
    \State{Input: Integer k and a list of fractional evaluation points $\frac{A_0(x)}{B_0(x)}, \frac{A_1(x)}{B_1(x)}, \cdots, \frac{A_{2k-1}(x)}{B_{2k-1}(x)}$. Quantum registers $f$, $g$, and $h$}
    \State{Output: a circuit applying $h \to h + fg$}
    \State{Evaluate the coefficient vector $C = eV^{-1}$. Elements of this vector are rational polynomials $C_i = x^{s_i}\frac{N_i(x)}{D_i(x)}$ where $N_i(x), D_i(x) \in GF(2)[x]$ and $N_i \equiv D_i \equiv 1 \mod x$}
    \For{$i{=}0$ to $2k{-}2$} 
        \State{Multiply $h$ by the adjoint of constant multiplication by $N_i(x)$} \Comment{Only $\cnotgate$s are used }
        \State{Multiply $h$ by the constant $D_i(x)$}\Comment{Only $\cnotgate$s are used }
        \State{Evaluate $f$ at the evaluation point. the result can be stored a register $\hat{f}$ of up to $\frac{m}{k} + t(k-1)$ qubits.}\Comment{Only $\cnotgate$s are used }
        \State{Repeat previous step for $g$ to get $\hat{g}$}\Comment{Only $\cnotgate$s are used }
        \State{Multiply $\hat{f}$ with $\hat{g}$ and store the result in $h[s_i:s_i + (2*|\hat{f}| - 1)]$} \Comment{This evaluates the point and multiplies by $x^{s_i}$ at the same time}
        \State{Undo steps 7-10}
    \EndFor

\State{Optimize the previous loop by reordering the evaluation points and grouping them based on common factors in their $N_i(x)$ and $D_i(x)$ to reduce $\cnotgate$ count}
\end{algorithmic}
\end{algorithm}
}

To construct circuits for Toom-$k$, we follow the procedure described in \alg{toom_cook_framework}. For example, to construct Toom-3, we use the evaluation points $\{(0, 1), (1, 1), (1, 0), (x, 1), (1, x)\}$ leading to the Vandermonde matrix $V_3$ and the coefficient vector $C_3$. The first, second, and fifth evaluations are simple to compute and are stored in $\frac{m}{3}$ qubits. The tricky cases are the third and fourth points since we need to use more than $\frac{m}{k}$ qubits. \alg{toom3_part3} gives the construction for the third evaluation point; the fourth point is similar.

Step 14 of \alg{toom_cook_framework} is to reorder and group terms to reduce the $\cnotgate$ count. To do this, notice that the coefficient for the point $(1, 0)$ is $x^n$ times the coefficient for $(0, 1)$, and the denominators for $(x, 1)$ and $(1, x)$ are the same, suggesting that we should group each pair together.  More optimizations are possible, and the use of local optimization will reduce the $\cnotgate$ count even further.

\begin{equation}
    V_3 = \begin{bmatrix}1 & 0 & 0 & 0 & 0\\1 & 1 & 1 & 1 & 1\\1 & x & x^{2} & x^{3} & x^{4}\\x^{4} & x^{3} & x^{2} & x & 1\\0 & 0 & 0 & 0 & 1\end{bmatrix}
\end{equation}

\begin{equation}
    C_3^T = \begin{bmatrix}
         \left(1 - x^{3 n}\right) + x^{2 n - 1} \left(x^{2} + x + 1\right) - x^{n-1} \left(x^{2} + x + 1\right)\\
        x^{n+1} \frac{x^{n-1} \left(x^{2} + 1\right) - 1 - x^{2 n}}{x^{2} - 2 x + 1}\\
        x^{n - 1} \frac{\left(- x^{n} - x^{n + 1} + x^{2 n + 1} + 1\right)}{x^{3} - x^{2} - x + 1}\\
        x^{n} \frac{\left(1 + x^{2 n-1} - x^{n-1} - x^{n}\right)}{x^{3} - x^{2} - x + 1}\\
        x^{n} \left(\left(x^{3 n} - 1\right) - x^{2 n - 1} \left(x^{2} + x + 1\right) + x^{n-1} \left(x^{2} + x + 1\right)\right)
    \end{bmatrix}
\end{equation}

{\centering
\begin{algorithm}[H]
\caption{The contribution of the evaluation point $(x, 1)$ in Toom-3} \label{alg:toom3_part3}
\begin{algorithmic}[1]	
    \State{Input: Quantum registers $f = f_0 + x^{m/3} f_1 + x^{2m/3} f_2$, $g = g_0 + x^{m/3} g_1 + x^{2m/3} g_2$, and $h$}
    \State{Output: a circuit applying $h \to h + (f_0 + x f_1 + x^2 f_2) (g_0 + x g_1 + x^2 g_2)$}
    \State{Multiply $h$ by the adjoint of constant multiplication by $x^{n} + x^{n + 1} + x^{2 n + 1} + 1$} 
    \State{Multiply $h$ by the constant $x^{3} + x^{2} + x + 1$}

    \State{$n = \lceil \frac{m}{3} \rceil$}
     \If{$m = 3 * n$}
        \State{$\hat{f}$ = f[:n] + f[-2:]}
        \State{$\hat{g}$ = g[:n] + g[-2:]}
        \State{$S = [0 \cdots n-1] \cup  \{m - 2, m - 1\}$}
    \Else
        \State{$\hat{f}$ = f[:n] + f[2 * n - 1 : 2 * n]}
        \State{$\hat{g}$ = g[:n] + g[2 * n - 1 : 2 * n]}
        \State{$S = [0 \cdots n-1] \cup  \{2n-1\}$}
    \EndIf

    \For{$i{=}0$ to $m-1$}
        \If{$i \notin S$}
            \State{$j = (i \mod n) + \lfloor \frac{i}{n} \rfloor$}
            \State{$\cnotgate(control{=}f[i], target{=}\hat{f}[j])$}
            \State{$\cnotgate(control{=}g[i], target{=}\hat{g}[j])$}
        \EndIf
    \EndFor
    \State{Multiply $\hat{f}$ with $\hat{g}$ and store the result in $h[n-1:n-1 + (2*|\hat{f}| - 1)]$}
    \State{Undo steps 4-22}
\end{algorithmic}
\end{algorithm}
}

\subsection{Results}

Table \ref{tab:toom3_cost} shows the cost of GF$(2)[x]$ multiplication using Toom-3 for powers of $3$. The number of $\toffoligate$ gates grows as $\mathcal{O}(m^{\log_3{5}})$, as expected.  However, because of the need to recurse on $\frac{m}{3}{+}2$ terms, we have a high leading constant yielding $\toffoligate$ count of $\approx 2.84 m^{\log_3{5}}$. This means that Toom-3 does not beat the Karatsuba algorithm on the $\toffoligate$ count until $m > 6{,}000$, and that is not mentioning the astronomical $\cnotgate$ count.

\begin{table}[]
    \centering
\begin{tabular}{rrr}
\toprule
m & Toffoli & $\cnotgate$ \\
\midrule
3 & 7 & 28 \\
9 & 67 & 1,380 \\
27 & 351 & 9,316 \\
81 & 1,783 & 53,668 \\
243 & 8,919 & 287,540 \\
729 & 44,527 & 1,492,460 \\
2,187 & 222,311 & 7,621,276 \\
6,561 & 1,110,483 & 38,575,428 \\
19,683 & 5,549,239 & 194,273,140 \\
59,049 & 27,737,287 & 975,541,380 \\
177,147 & 138,662,151 & 4,890,229,396 \\
531,441 & 693,245,623 & 24,488,748,508 \\
1,594,323 & 3,466,055,079 & 122,556,711,260 \\
4,782,969 & 17,329,818,307 & 613,123,014,500 \\
14,348,907 & 86,647,888,271 & 3,066,635,091,076 \\
43,046,721 & 433,236,280,503 & 15,336,240,137,508 \\
129,140,163 & 2,166,173,110,039 & 76,690,407,368,500 \\
387,420,489 & 10,830,843,812,287 & 383,479,690,809,420 \\
1,162,261,467 & 54,154,162,105,671 & 1,917,481,506,132,316 \\
\bottomrule
\end{tabular}
    \caption{Cost of GF$(2^m)$ multiplication using Toom-3.}
    \label{tab:toom3_cost}
\end{table}

\end{document}